\documentclass[aos,preprint]{imsart}

\RequirePackage[OT1]{fontenc}
\RequirePackage{amsthm,amsmath}
\RequirePackage{natbib}
\RequirePackage[colorlinks,citecolor=blue,urlcolor=blue]{hyperref}

\arxiv{arXiv:0000.0000}

\startlocaldefs
\numberwithin{equation}{section}
\theoremstyle{plain}

\endlocaldefs

\usepackage{amsmath,amssymb,amsfonts}
\usepackage{natbib}
\usepackage{color}
\usepackage{pdfsync}
\usepackage{float}
\usepackage{graphicx}
\usepackage{subfigure}

\usepackage{booktabs}
\usepackage{multirow}
\usepackage{tikz}
\usepackage{tkz-euclide}
\usetikzlibrary{decorations.pathreplacing}
\usetkzobj{all}
\usepackage{xcolor,colortbl}
\usepackage{verbatim}
\usepackage{hyperref}

\newtheorem{theorem}{Theorem}
\newtheorem{corollary}{Corollary}

\newtheorem{lemma}{Lemma}

\newtheorem{algorithm}{Algorithm}

\begin{document}

\begin{frontmatter}
\title{Bayesian Nonparametric Tests via Sliced Inverse Modeling}

\begin{aug}
\author{\fnms{Bo} \snm{Jiang}\ead[label=e1]{bojiang83@gmail.com}},
\author{\fnms{Chao} \snm{Ye}\ead[label=e2]{yechao1009@gmail.com}}
\and
\author{\fnms{Jun S.} \snm{Liu}\ead[label=e3]{jliu@stat.harvard.edu}}

\runauthor{Jiang, Ye and Liu}

\affiliation{Harvard University, Tsinghua University and Harvard University}

\address{Department of Statistics\\
Harvard University\\
1 Oxford Street, Cambridge\\
MA 02138, USA\\
\printead{e1}\\
\phantom{E-mail:\ }\printead*{e3}}

\address{Department of Automation\\
Tsinghua University\\
Beijing 100084, China\\
\printead{e2}}
\end{aug}

\begin{abstract}
We study the problem of independence and conditional independence tests between categorical covariates and  a continuous response variable, which has an immediate application in genetics. Instead of estimating the conditional distribution of the response given values of covariates, we model the conditional distribution of covariates given the discretized response (aka ``slices''). By assigning a prior probability to each possible discretization scheme, we can compute efficiently a Bayes factor (BF)-statistic for the independence (or conditional independence) test using a dynamic programming algorithm. Asymptotic and finite-sample properties such as power and null distribution of the BF statistic are studied, and a stepwise variable selection method based on the BF statistic is further developed.  We compare the BF statistic with some existing classical methods and demonstrate its statistical power through extensive simulation studies. We apply the proposed method to a mouse genetics data set aiming to detect quantitative trait loci (QTLs) and obtain promising results. 
\end{abstract}

\begin{keyword}[class=AMS]
\kwd[Primary ]{62G10}
\kwd[; secondary ]{62C10}
\kwd{62P10}
\end{keyword}

\begin{keyword}
\kwd{Bayes factor}
\kwd{dynamic programming}
\kwd{non-parametric tests}
\kwd{sliced inverse model}
\kwd{variable selection}
\end{keyword}

\end{frontmatter}

\section{Introduction}

Statistical tools for analyzing data sets with categorical covariates and continuous response have been extensively used in many areas such as genetics, clinical trials, social science, and internet commerce. By grouping individual observations according to combinatoric configurations of covariates, classical regression-based methods are derived from conditional models of response given configurations of covariates. Recent demands for analyzing large-scale, high-dimensional data sets pose new challenges to these traditional methods. For example, in quantitative trait loci (QTL) mapping \citep{lander1994, brem2002, morley2004}, scientists wish to discover genomic loci associated with a continuous quantitative trait such as human height, crop yield, or gene expression level, by sequencing hundreds or thousands of genetic markers (encoded as categorical variables) on a genome-wide scale. Regression-based approaches such as analysis of variance (ANOVA) are sensitive to distributional assumption of quantitative traits, and ineffective in detecting individual  markers with heteroscedastic or other higher-order effects. Recently, \cite{aschard2013} proposed a non-parametric method to test whether the distribution of quantitative traits differs by genotypes of a genetic marker. However, for complex traits such as human height, some important genetic markers may have different effects in combination than individually (i.e., the \emph{epistasis} effect in genetic terminology) and thus need to be considered jointly. The number of possible genotype configurations grows exponentially with the number of genetic markers under consideration and, furthermore, markers located on the same chromosome can be highly correlated. Even with a sample size of several hundreds and a moderate number of genetic markers, it is very likely that some genotype configurations contain very few or even no observations. Traditional non-parametric methods have limited power in such situations.

In this paper, we propose a method for detecting associations based on a model of categorical variables conditional on a discretization of the continuous response. This \emph{inverse} modeling perspective is motivated by the na{\" i}ve Bayes method and the ``Bayesian epistatic association mapping'' (BEAM) model of \cite{zhang2007}. BEAM was developed to detect epistatic interactions in genome-wide case-control association studies. Both methods prefer the response variable to be discrete, and their extensions to cases with a continuous response often relies on an {\it ad hoc} discretization strategy. Recently, \cite{jiang2014b} proposed a non-parametric $K$-sample test from the inverse modeling perspective and developed a dynamic slicing (DS) algorithm to determine the optimal discretization (aka ``slicing'') that maximizes a regularized likelihood. In this paper, we employ a full-Bayesian view on the inverse modeling approach  and further generalize the  framework to testing conditional independence between a continuous response and a categorical variable given a set of (previously selected) categorical variables. Instead of constructing test statistics based on regularized likelihood ratios as in DS of \cite{jiang2014b}, we calculate the Bayes Factor (BF) by marginalizing over all possible slicing schemes under the inverse model. From numerical studies, we observed that the BF approach has a superior power in detecting both unconditional and conditional dependences compared with DS and other non-parametric testing methods. The proposed conditional dependence test is further used to construct a stepwise searching strategy for categorical variable selection and applied to QTL mapping analysis. 

The rest of this paper is organized as follows. In Section $2.1$, we construct a non-parametric test based on the Bayes Factor of a sliced inverse model, which we refer to as the BF statistic, to detect the conditional dependence between a continuous response and a categorical covariate.  An efficient dynamic programming algorithm is developed in Section $2.2$ to compute the BF statistic and its asymptotic and finite-sample properties are studied in Section $2.3$. We investigate the sensitivity of the BF statistic to choices of hyper-parameters and fit an empirical formula that links the value of the BF with type-I error in Section $2.4$ and $2.5$, respectively. A forward stepwise variable selection procedure based on the proposed BF statistic is described in Section $2.6$. In Section $3$, we use simulations to evaluate the powers of different methods for both unconditional and conditional dependence tests, and compare the BF statistic with classic stepwise regression in detecting interaction on synthetic QTL data sets. In Section $4$, we further illustrate the proposed methodology on a mouse QTL data set and demonstrate its advantage over traditional QTL mapping methods. Additional remarks in Section $5$ conclude the paper. Proofs of the theorems and other technical derivations are provided in the appendix and online supplement \citep{jiang2015}.

\section{An inverse model for non-parametric dependence test}

Suppose $Y$ is a continuous response variable, and both $X$ and $Z$ are categorical variables with $|X|$ and $|Z|$ levels, respectively. Assume that we have known that $Y$ is dependent of  $Z$. Note that $Z$ can be a ``super'' variable if there are more than one actual variables having been previously selected, in which case we encode each possible configuration of the selected variables as a level of $Z$. For example, if we have selected $Z_1$ and $Z_2$, both of which have support in $\{0,1\}$, then, we can define $Z = Z_1+2Z_2$. Define $Z \equiv 0$ (and thus $|Z|=1$) if we are  interested in testing the marginal independence between $Y$ and $X$. We consider the following hypothesis testing problem:
\begin{eqnarray}\nonumber
&H_0&: \text{$X$ and $Y$ are conditionally independent given $Z$}\\\nonumber 
\text{ v.s. } &H_1&:  \text{$X$ and $Y$ are not conditionally independent given $Z$}
\end{eqnarray}
Note again that  all the results in this section is directly applicable to testing the unconditional dependence between  $X$ and $Y$ (i.e., by letting $Z\equiv 0$). 

Suppose $\{(x_i, y_i, z_i)\}_{i=1}^{n}$ are independent observations of $(X, Y, Z)$. Without loss of generality, henceforth we assume that observations have been sorted according to the $Y$ values so that  $y_i = y_{(i)}$. We divide the sorted list of observations into slices and define a function $S(y_i)$ taking values in $\{1,2,...,|S|\}$ as the slice membership of $y_i$, where $|S|$ denotes the total number of slices. Under the null hypothesis, the conditional distribution of $X$ given $Z$ does not depend on $Y$ and
\begin{equation}\label{eq:h0}
X \mid \  Y = y, Z = j \sim \text{Multinomial}\left(1,p_j \right),
\end{equation}
where $p_j = \left(p_{j,1},\ldots,p_{j,|X|}\right)$ and $\sum_{k=1}^{|X|} p_{j,k}=1$ for $j = 1,\ldots,|Z|$. Under the alternative hypothesis, the distribution of $X$ conditional on $Z=j$ and $S(Y)=h$ ($1 \leq h \leq |S|$) is given by
\begin{equation}\label{eq:h1}
X \mid \ Z=j, S(Y) = h \sim \text{Multinomial}\left(1, p_j^{(h)} \right),
\end{equation}
where $p_j^{(h)} = \left(p_{j,1}^{(h)},\ldots,p_{j,|X|}^{(h)}\right)$ and $\sum_{k=1}^{|X|}p_{j,k}^{(h)}=1$ for $j = 1,\ldots,|Z|$ and $h = 1,\ldots,|S|$. \cite{jiang2014b} proposed a dynamic slicing (DS) statistic to test the above hypotheses with $Z\equiv 0$ based on a regularized likelihood ratio. The DS statistic can be generalized to test conditional dependence with $|Z|>1$, but the number of parameters in the model increases dramatically as $|Z|$ increases, which may impair the power of the method. The details of the generalized DS statistic, algorithms and theoretical results are provided in the online supplement \citep{jiang2015}. Here, we explore a different testing approach based on the Bayes factor (BF). In Section $3$, we will show that the proposed BF statistic consistently outperforms the DS statistic under a variety of scenarios in simulations.

\subsection{Bayes factor under inverse model}

Under the null model (\ref{eq:h0}) and the alternative model (\ref{eq:h1}), we further assume the following priors on $p_j$ and $p_j^{(h)}$ (whose dimensionalities are $|X|$), respectively:
\begin{equation}\label{eq:dir}
p_j \sim \text{Dirichlet}\left(\frac{\alpha_0}{|X|}, \ldots, \frac{\alpha_0}{|X|}\right),
\end{equation}
and
$$
p_j^{(h)} \sim \text{Dirichlet}\left(\frac{\alpha_0}{|X|}, \ldots, \frac{\alpha_0}{|X|}\right),
$$
where $\alpha_0 > 0$ is a hyper-parameter. First, we randomly draw a discretization of $Y$, $\{S(y_i)\}_{i=1}^n$, and then conditional on this discretization, the distribution of $X$ then depends jointly on $Z$ and on the slice containing $Y$. With a slight abuse of notation, we let $\text{Pr}_{H_1}\left(X \mid S(Y),Z\right)$ denote the shorthand of the probability of observing $\{X_i=x_i\}_{i=1}^n$ under $H_1$ given $\{z_i\}_{i=1}^n$ and the slicing scheme $\{S(y_i)\}_{i=1}^n$. After integrating out $p_j^{(h)}$, we can  write down the probability
$$
\text{Pr}_{H_1}\left(X\mid S(Y),Z\right) = \prod_{j=1}^{|Z|}\prod_{h=1}^{|S|} \left[\frac{\Gamma(\alpha_0)}{\Gamma\left(\alpha_0 + n_j^{(h)}\right)}\prod_{k=1}^{|X|}\frac{\Gamma\left(n_{j,k}^{(h)}+\frac{\alpha_0}{|X|}\right)}{\Gamma\left(\frac{\alpha_0}{|X|}\right)}\right],
$$
where $n_{j,k}^{(h)}$ is the number of observations with $z_i=j$, $x_i=k$ and $S(y_i)=h$, and $n_j^{(h)} = \sum_{k=1}^{|X|}n_{j,k}^{(h)}$ is the number of observations with $z_i=j$ and $S(y_i)=h$. Similarly, since  $X$ is independent of any slicing of $Y$ conditional on $Z$ under $H_0$,  by integrating out $p_j$ we have  
$$
\text{Pr}_{H_0}\left(X\mid Y,Z\right) = \text{Pr}_{H_0}\left(X\mid Z\right) = \prod_{j=1}^{|Z|} \left[ \frac{\Gamma(\alpha_0)}{\Gamma\left(\alpha_0 + n_j\right)}\prod_{k=1}^{|X|}\frac{\Gamma\left(n_{j,k}+\frac{\alpha_0}{|X|}\right)}{\Gamma\left(\frac{\alpha_0}{|X|}\right)}\right],
$$ 
where $n_{j,k}$ is the number of observations with $z_i=j$ and $x_i=k$, and $n_j = \sum_{k=1}^{|X|}n_{j,k}$ is the number of observations with $z_i=j$.    

Given $n$ observations ranked by their response values, we denote the collection of all possible slicing schemes as $\Omega_n(S)$ and the probability for choosing a slicing scheme $S(\cdot)$ from $\Omega_n(S)$ {\it a priori} as $\text{Pr}\left(S(Y)\right)$. For a slicing scheme $S(\cdot)$ with $|S|$ slices, we assume here that
\begin{equation}\label{eq:slice}
\text{Pr}\left(S(Y)\right) = \pi_0^{|S|-1}(1-\pi_0)^{n-|S|}.
\end{equation}
That is, with probability $\pi_0$, a ``slice'' is ``inserted'' independently  between the $i$th and $(i+1)$th ranked observations, for $i=1,\ldots,n-1$. Given $n$ observations, we re-parameterize the prior as  $\pi_0 \equiv  1/(1+n^{\lambda_0})$  so that on the log-odds scale we have: 
\begin{equation}\label{eq:lambda}
\log\left(\pi_0/(1-\pi_0)\right) \equiv -\lambda_0 \log(n).
\end{equation}

Under $H_1$ and the slicing prior in (\ref{eq:slice}), we have
\begin{equation}\label{eq:sum}
\text{Pr}_{H_1}\left(X|Z,Y\right) = \sum_{S(Y) \in \Omega_n(S)} \text{Pr}_{H_1}\left(X|Z,S(Y)\right)\text{Pr}\left(S(Y)\right).
\end{equation}
Finally, the BF statistic for comparing the model under the alternative hypothesis and the null is defined as the Bayes factor for testing $H_1$ against $H_0$:
\begin{equation}\label{eq:bf}
\text{BF}\left(X|Z,Y\right) = \frac{\text{Pr}_{H_1}\left(X|Z,Y\right)}{\text{Pr}_{H_0}\left(X|Z,Y\right)} = \sum_{S(Y) \in \Omega_n(S)} \text{BF}\left(X|Z,S(Y)\right)\text{Pr}\left(S(Y)\right),
\end{equation}
where 
$$
\text{BF}\left(X|Z,S(Y)\right) = \frac{\text{Pr}_{H_1}\left(X|Z,S(Y)\right)}{\text{Pr}_{H_0}\left(X|Z,Y\right)}.
$$
We will describe an efficient algorithm to calculate the BF statistic (\ref{eq:bf}) next.

\subsection{A dynamic programming algorithm}
To avoid a bruteforce  summation over $2^{n-1}$ possible slicing schemes in $\Omega_n(S)$, we use a dynamic programming algorithm to calculate $\text{BF}\left(X|Z,Y\right)$ in (\ref{eq:bf}) as follows:
\begin{algorithm}\label{alg:dyn} \ 
\normalfont 
\begin{itemize}
\item \emph{Step $1$}: Rank observations according to the observed values of $Y$, $\{y_i\}_{i=1}^n$. Slicing is only allowed along the ranked list of observations.
\item \emph{Step $2$:} For $1 \leq s \leq t \leq n$, calculate
$$
\psi_{s,t} = \prod_{j=1}^{|Z|} \left[\frac{\Gamma(\alpha_0)}{\Gamma\left(\alpha_0 + n_j^{(s:t)}\right)}\prod_{k=1}^{|X|}\frac{\Gamma\left(n_{j,k}^{(s:t)}+\frac{\alpha_0} {|X|}\right)}{\Gamma\left(\frac{\alpha_0}{|X|}\right)}\right],
$$
where $n_{j,k}^{(s:t)}$ is the number of observations with $z_i=j$ and $x_i=k$ for $s \leq i \leq t$, and $n_j^{(s:t)} = \sum_{k=1}^{|X|} n_{j,k}^{(s:t)} $ is the number of observations with $z_i=j$ for $s \leq i \leq t$.
\item \emph{Step $3$}: Fill in entries of the table $\left\{w_t\right\}_{t=1}^n$ (define $w_1 \equiv (1-\pi_0)/\pi_0$) recursively for $t=2,\ldots,n$,
$$
w_t = \sum_{s=2}^t w_{s-1}(1-\pi_0)^{t-s+1}\left[\frac{\psi_{1,s-1}\psi_{s,t}}{\psi_{1,t}}\right],
$$
where $w_s$ stores a partial sum of  (\ref{eq:bf}) until the $s$th ranked observation.
Then, 
$$
\mathrm{BF}\left(X|Z,Y\right) = w_n.
$$
\end{itemize}
The computational complexity of the dynamic slicing algorithm is $O(n^2)$. 
\end{algorithm}

\subsection{Asymptotic and finite-sample properties of the BF statistic}
We derive theoretical bounds on type-I errors of the BF statistic under three different sampling schemes (conditional permutation, unconditional sampling under the sharp null, and unconditional sampling under the hierarchical null) and show that, as sample size $n \rightarrow \infty$ and with appropriate choice of hyper-parameters, the BF statistic is almost surely smaller than or equal to $1$ for all three sampling schemes under the null. Moreover, when the alternative hypothesis $H_1$ is true, we prove that the BF statistic goes to infinity at an exponential rate with the increase of sample size. 

Given $n$ observations $\{x_i, y_i,z_i\}_{i=1}^n$, we can calculate the $p$-value of the observed BF statistic by shuffling the observed values of $X$ within each group of observations indexed by $\{i : z_i=j\}$, independently for $j \in \{1,\ldots,|Z|\}$. We call this shuffling scheme  the conditional permutation null.
Let $\text{Pr}_{\text{shuffle}}\left(\text{BF}\left(X|Y,Z\right) > b\right)$ denote the probability of observing a BF value of $b$ or larger using the conditional permutation scheme. We prove the following theorem in Appendix~\ref{app:thm1}.

\begin{theorem}\label{thm:bf1} Assume that the hyper-parameter $0 < \alpha_0 \leq |X|$ and observed sample size $n \geq |X|$. There exists a constant $C_1>0$, which only depends on $|X|$ and $|Z|$, such that
$$
\mathrm{Pr}_{\mathrm{shuffle}}\left(\mathrm{BF}\left(X|Y,Z\right) > b\right) \leq C_1n^{|Z|(|X|-1)} \min \left\{\frac{1}{(\log(b)+1)n^{\lambda_0-3}}, \frac{1}{b}\right\},
$$
for any $b \geq 1$ and $\lambda_0$ as defined in (\ref{eq:lambda}). Thus, 
$$
\mathrm{Pr}_{\mathrm{shuffle}}\left(\mathrm{BF}\left(X|Y,Z\right) > 1\right) \leq \frac{C_1}{n^{\lambda_0-|Z|(|X|-1)-3}}
$$
and $\mathrm{BF}\left(X|Y,Z\right) \leq 1 \text{ a.s.}$ as $n \rightarrow \infty$ for $\lambda_0 > |Z|(|X|-1)+4$.
\end{theorem}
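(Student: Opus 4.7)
The theorem combines two tail bounds via a minimum, and I would prove them separately. For the $C_1 n^{|Z|(|X|-1)}/b$ arm I would use that the permutation null distribution of $X$, conditional on the marginal counts $\{n_{j,k}\}$, is uniform on arrangements with those counts and hence coincides with the $H_0$ posterior predictive on the same sufficient statistic. Since $\mathrm{Pr}_{H_0}(X|Z)$ depends only on those counts, a short calculation gives
\[
E_{\mathrm{shuffle}}\bigl[\mathrm{BF}(X|Y,Z)\bigr] \;=\; \frac{\mathrm{Pr}_{H_1}(\{n_{j,k}\}\mid Y,Z)}{\mathrm{Pr}_{H_0}(\{n_{j,k}\}\mid Z)},
\]
a ``Bayes factor for counts''. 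I would bound this ratio by $C_1 n^{|Z|(|X|-1)}$ using Stirling on the Dirichlet--multinomial marginal likelihoods; the hypotheses $0<\alpha_0\le |X|$ and $n\ge |X|$ guarantee that the constants in the $\Gamma$-ratios are uniform. Markov's inequality then delivers the $1/b$ arm.

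For the $C_1 n^{|Z|(|X|-1)-\lambda_0+3}/(\log b+1)$ arm, I would decompose $\mathrm{BF}$ by the number of slices $k=|S|$:
\[
\mathrm{BF} \;=\; (1-\pi_0)^{n-1} \;+\; \sum_{k\ge 2} \pi_0^{k-1}(1-\pi_0)^{n-k}\!\!\sum_{S:\,|S|=k}\!\!\mathrm{BF}(X|Z,S(Y)),
\]
noting that for $k=1$ the per-slicing BF equals $1$ exactly, so the leading term is $(1-\pi_0)^{n-1}\le 1$. For $k\ge 2$, combining the per-slicing shuffle expectation bound $\mathrm{MBF}_S \le C n^{|Z|(|X|-1)}$ from the first arm with the combinatorial count $\binom{n-1}{k-1}\le n^{k-1}$ and the prior penalty $\pi_0^{k-1}\le n^{-\lambda_0(k-1)}$ yields a geometric series that is summable under $\lambda_0>1$ and controls $E_{\mathrm{shuffle}}[\mathrm{BF}-(1-\pi_0)^{n-1}]$. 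The logarithm in $b$ then appears by applying Markov to $\log(1+\mathrm{BF})\ge\log(1+b)$ after bounding $E_{\mathrm{shuffle}}[\log(1+\mathrm{BF})]$ via a two-region split, $\{\mathrm{BF}\le B\}$ vs.\ $\{\mathrm{BF}>B\}$ with $B\asymp n^{\lambda_0-3-|Z|(|X|-1)}$: on the small side use $\log(1+x)\le x$ together with the expected-remainder bound; on the large side use the layer-cake identity to integrate the $1/b$ arm already established. These pieces then assemble into the stated polynomial factor.

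Setting $b=1$ in the logarithmic arm gives $\mathrm{Pr}_{\mathrm{shuffle}}(\mathrm{BF}>1)\le C_1/n^{\lambda_0-|Z|(|X|-1)-3}$, which is summable in $n$ once $\lambda_0>|Z|(|X|-1)+4$; Borel--Cantelli then forces $\mathrm{BF}\le 1$ almost surely. The main obstacle I foresee is establishing the uniform per-slicing $\mathrm{MBF}_S$ bound that is independent of $k$ and of the individual slice sizes $n_j^{(h)}$---particularly when some slices are nearly empty for some $Z$-levels---and the assumption $\alpha_0\le |X|$ plays an essential role here in keeping $\Gamma(\alpha_0/|X|+m)\le m!$ up to a bounded prefactor, which prevents Gamma-ratio blow-up in degenerate slicings. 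A secondary difficulty is the bookkeeping in the two-region split for $E[\log(1+\mathrm{BF})]$: the cutoff $B$ must be chosen exactly so that the ``small'' and ``large'' contributions balance to produce the precise exponent $\lambda_0-3-|Z|(|X|-1)$ claimed in the theorem.
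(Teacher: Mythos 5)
There is a genuine gap in your treatment of the logarithmic arm, which is the arm that actually matters for the $b=1$ conclusion. Your $1/b$ arm is fine and is essentially the paper's argument: the shuffle law is dominated by the $H_0$ Dirichlet--multinomial predictive up to a factor $C_0 n^{|Z|(|X|-1)}$ (this is where $\alpha_0\le|X|$ and $n\ge|X|$ enter), and Markov on $\mathrm{BF}$ does the rest. But your slice-count decomposition for the other arm cannot deliver the stated exponent. Bounding $E_{\mathrm{shuffle}}\bigl[\mathrm{BF}-(1-\pi_0)^{n-1}\bigr]$ by the geometric series gives $O\bigl(n^{|Z|(|X|-1)+1-\lambda_0}\bigr)$, but since $\mathrm{BF}\ge(1-\pi_0)^{n-1}$ deterministically, the event $\{\mathrm{BF}>1\}$ is the event that the excess exceeds $1-(1-\pi_0)^{n-1}=O(n^{1-\lambda_0})$; Markov then yields $O\bigl(n^{|Z|(|X|-1)}\bigr)$, which is vacuous. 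The $\log(1+\mathrm{BF})$ variant fails for the same reason: $\log(1+\mathrm{BF})\ge\log\bigl(1+(1-\pi_0)^{n-1}\bigr)=\log 2-o(1)$ always, so its expectation is bounded away from zero and Markov at the threshold $\log(1+b)=\log 2$ returns a constant, with no power of $n$. No additive first-moment argument on the full mixture can work here, because the mixture concentrates just below $1$.

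The idea you are missing is the paper's multiplicative localization. Each per-slicing Bayes factor factorizes as $\mathrm{BF}(X|S(Y),Z)=\prod_{h\ge2}\Delta(\cdot,\cdot)$ where $\Delta(s,t)=\psi_{1,s-1}\psi_{s,t}/\psi_{1,t}$, and one checks the deterministic implication: if $ne\frac{\pi_0}{1-\pi_0}\Delta(s,t)\le\log(b)+1$ for \emph{every} pair $2\le s\le t\le n$, then
$$
\mathrm{BF}(X|Y,Z)\;\le\;\sum_{S}\Bigl(\tfrac{\log(b)+1}{n}\Bigr)^{|S|-1}\Bigl(1-\tfrac1n\Bigr)^{n-|S|}=\Bigl(1+\tfrac{\log b}{n}\Bigr)^{n-1}\le b.
$$
Hence $\mathrm{Pr}_{\mathrm{shuffle}}(\mathrm{BF}>b)$ is bounded by a union over $O(n^2)$ pairs of the events $\bigl\{ne\frac{\pi_0}{1-\pi_0}\Delta(s,t)>\log(b)+1\bigr\}$. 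Each $\Delta(s,t)$ has $\Psi$-expectation exactly $1$ (by the telescoping of the $\psi$'s), so the change of measure plus Markov at the \emph{large} threshold $\delta\asymp(\log b+1)n^{\lambda_0-1}$ gives $C_0n^{|Z|(|X|-1)}/\delta$ per pair, and the union bound produces the factor $n^{3-\lambda_0}$. This is the step that makes the bound nontrivial at $b=1$ and yields the Borel--Cantelli conclusion for $\lambda_0>|Z|(|X|-1)+4$; without it your sketch does not close.
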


The above definition of type-I error is conditioning on $\{n_{j,k}:j=1,\ldots,|Z|,k=1,\ldots,|X|\}$, \emph{i.e.} the number of observations with $z_j=j$ and $x_i=k$. When the total number of observations $n$ is large, the conditional permutation null can be approximated by the following \emph{sharp} null hypothesis:
$$
H_0^{\mathrm{sharp}}: X | Y = y, Z = j \sim \text{Multinomial}\left(1,p_j \right), j=1,\ldots,|Z|,
$$  
where $p_j$'s are fixed but unknown. The following corollary, whose proof is given in Appendix~\ref{app:cor1}, provides a similar finite-sample bound on unconditional type-I error under the sharp null hypothesis. 
\begin{corollary}\label{cor:bf1}
Assume that the hyper-parameter $0 < \alpha_0 \leq |X|$. When the sharp null hypothesis $H_0^{\mathrm{sharp}}$ is true, there exists a constant $C_2>0$, which only depends on $|X|$ and $|Z|$, such that
$$
\mathrm{Pr}_{\mathrm{sharp}}\left(\mathrm{BF}\left(X|Y,Z\right) > b\right) \leq C_2 n^{|Z|(|X|-1.5+\gamma_0)}\min \left\{\frac{1}{(\log(b)+1)n^{\lambda_0-3}}, \frac{1}{b}\right\}.
$$
for any $b \geq 1$, where $\gamma_0=0.57722\ldots$ is the Euler-Mascheroni constant and $\lambda_0$ is defined in (\ref{eq:lambda}). Thus,
$$
\mathrm{Pr}_{\mathrm{sharp}}\left(\mathrm{BF}\left(X|Y,Z\right) > 1\right) \leq \frac{C_2}{n^{\lambda_0-|Z|(|X|-1.5+\gamma_0)-3}}
$$
and $\mathrm{BF}\left(X|Y,Z\right) \leq 1 \text{ a.s.}$ as $n \rightarrow \infty$ for $\lambda_0 > |Z|(|X|-1.5+\gamma_0)+4$.
\end{corollary}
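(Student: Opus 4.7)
The plan is to reduce the sharp-null tail bound to the conditional-permutation tail bound of Theorem~\ref{thm:bf1} by conditioning on the cell count table $\{n_{j,k}\}$. Under $H_0^{\mathrm{sharp}}$, the $x_i$ values with $z_i=j$ are i.i.d.\ $\mathrm{Multinomial}(1,p_j)$, so conditional on the counts $n_{j,k}$ their arrangement along the $y$-ranked list is uniform over all orderings consistent with those counts. This is precisely the conditional-permutation law used in Theorem~\ref{thm:bf1}, so by the tower property
$$\mathrm{Pr}_{\mathrm{sharp}}\!\left(\mathrm{BF}(X\mid Y,Z) > b\right) \;=\; E_{\mathrm{sharp}}\!\left[\mathrm{Pr}_{\mathrm{shuffle}}\!\left(\mathrm{BF}(X\mid Y,Z) > b \;\big|\; \{n_{j,k}\}\right)\right].$$

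Although Theorem~\ref{thm:bf1} furnishes a bound uniform in $\{n_{j,k}\}$ once $n\geq|X|$, its proof proceeds through intermediate estimates that depend on the individual cell counts through ratios of the form $\Gamma(n_{j,k}+\alpha_0/|X|)/\Gamma(\alpha_0/|X|)$ arising from the Dirichlet marginalization in~(\ref{eq:dir}). To obtain the exact Euler--Mascheroni factor claimed by the corollary, I would retain this count-dependent form rather than collapsing it to the worst-case $n^{|X|-1}$ per $z$-stratum, and then take the sharp-null expectation. The vector $(n_{j,1},\ldots,n_{j,|X|})$ is $\mathrm{Multinomial}(n_j,p_j)$ under $H_0^{\mathrm{sharp}}$, so the relevant moments $E[\log\Gamma(n_{j,k}+c)]$ are controlled via Stirling's formula; the Euler--Mascheroni constant enters through the digamma identity $\psi(1)=-\gamma_0$ (equivalently, $\sum_{m=1}^{M}1/m=\log M+\gamma_0+O(1/M)$) in the first-order correction to Stirling's expansion. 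After taking a supremum over the unknown $p_j$, this sharpens the worst-case exponent $|X|-1$ to the expected exponent $|X|-3/2+\gamma_0$ per stratum, producing the advertised prefactor $n^{|Z|(|X|-1.5+\gamma_0)}$ when multiplied across $j=1,\ldots,|Z|$. The $\min\{1/((\log b+1)n^{\lambda_0-3}),\,1/b\}$ factor, depending only on the slicing prior in~(\ref{eq:slice}), is inherited unchanged from Theorem~\ref{thm:bf1}.

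Specializing $b=1$ then yields the displayed second bound, and a standard Borel--Cantelli argument with $\lambda_0>|Z|(|X|-1.5+\gamma_0)+4$ delivers the almost-sure conclusion $\mathrm{BF}(X|Y,Z)\leq 1$. The main technical obstacle is the Stirling/digamma bookkeeping in the second step: establishing uniform two-sided control of $E[\log\Gamma(n_{j,k}+\alpha_0/|X|)]$ over all admissible $p_j$, including the awkward regime in which some $p_{j,k}$ is small so that $n_{j,k}$ may itself be small and Stirling's expansion degrades. The initial exchangeability reduction and the Borel--Cantelli conclusion are routine; essentially all of the real analytic content lies in tracking the specific constant $3/2-\gamma_0$ through the multinomial expectation.
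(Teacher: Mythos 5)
Your opening reduction is valid and, by itself, already proves the corollary --- indeed a slightly stronger version of it. Under $H_0^{\mathrm{sharp}}$ the $x_i$'s within each $Z$-stratum are i.i.d.\ given $\{z_i\}$, so conditional on the table $\{n_{j,k}\}$ their arrangement along the $Y$-ranked list is exactly the conditional-permutation law, and the tower property gives $\mathrm{Pr}_{\mathrm{sharp}}(\mathrm{BF}>b)=E\left[\mathrm{Pr}_{\mathrm{shuffle}}(\mathrm{BF}>b\mid\{n_{j,k}\})\right]\leq C_1 n^{|Z|(|X|-1)}\min\bigl\{\tfrac{1}{(\log(b)+1)n^{\lambda_0-3}},\tfrac{1}{b}\bigr\}$, since the bound of Theorem~\ref{thm:bf1} is uniform over tables (for $n\geq|X|$). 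Because $\gamma_0>0.5$, we have $|X|-1<|X|-1.5+\gamma_0$, so $n^{|Z|(|X|-1)}\leq n^{|Z|(|X|-1.5+\gamma_0)}$ and the stated bound follows with $C_2=C_1$; nothing more is needed. Your second step rests on a directional confusion: you describe $|X|-1.5+\gamma_0$ as a ``sharpening'' of the worst-case exponent $|X|-1$, but it is the \emph{larger} exponent, so the corollary is weaker than what your first step delivers, and the Stirling/digamma bookkeeping you single out as the main technical obstacle is unnecessary (and, as a program for proving an upper bound, pointed the wrong way).

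The paper's own proof takes a genuinely different route that never passes through the permutation null. It bounds the sharp-null likelihood of any prefix by its empirical maximum, $\prod_{j,k}p_{j,k}^{n_{j,k}^{(1:t)}}\leq\prod_{j,k}\bigl(n_{j,k}^{(1:t)}/n_j^{(1:t)}\bigr)^{n_{j,k}^{(1:t)}}$, and lower-bounds the Dirichlet prior predictive $\Psi(x_1,\ldots,x_t)$ by this same quantity times $(1/B_0)\,n^{-|Z|(|X|-1.5+\gamma_0)}$, using the deterministic inequalities $\log\Gamma(x)\leq(x-0.5)\log(x)-x+1$ and $\log\Gamma(x)\geq(x-\gamma_0)\log(x)-x+1$. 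The Euler--Mascheroni constant thus enters through a pointwise log-gamma bound applied to the observed counts, not through multinomial moments of $\log\Gamma$ or the identity $\psi(1)=-\gamma_0$. Once the covering inequality $\mathrm{Pr}_{\mathrm{sharp}}(\{x_i\}_{i=1}^t\mid\{z_i\}_{i=1}^t)\leq B_0\,n^{|Z|(|X|-1.5+\gamma_0)}\Psi(x_1,\ldots,x_t)$ is in hand, the change-of-measure and union-bound machinery of Theorem~\ref{thm:bf1} is rerun verbatim. In short: keep your first paragraph, add the observation that the theorem's exponent already dominates the corollary's, and drop the second step entirely.
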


The sharp null hypothesis assumes that the nuisance frequency parameters $\{p_j\}_{j=1}^{|Z|}$ are fixed but unknown. We may further consider a hierarchical sampling scheme where the frequency parameters are sampled from some unknown distributions with bounded densities. This is especially relevant when we repeat the dependence test on a collection of covariates (\emph{e.g.} genetic markers) with the same number of categories but varying marginal frequencies. Specifically, we consider the hierarchical null hypothesis as follows:
\begin{eqnarray}\nonumber
& H_0^{\mathrm{hierar}}:  & X | Y = y, Z = j \sim \text{Multinomial}\left(1,p_j \right),\\\nonumber 
& & \text{ and } p_j \sim f_j\left(p_j\right), j =1,\ldots,|Z|,
\end{eqnarray}
where there exists $M_0>0$ such that the unknown prior density $f_j\left(p_j\right)\leq M_0$ for $j=1,\ldots,|Z|$. When $H_0^{\mathrm{hierar}}$ is true, the Dirichlet prior (\ref{eq:dir}) with $\alpha_0 \leq |X|$ has a positive probability of overlapping with the \emph{true} distribution of $p_j$. Thus, we can obtain a tighter type-I error bound under the hierarchical null hypothesis according to the following corollary proved in Appendix~\ref{app:cor2}.
\begin{corollary}\label{cor:bf2}
Assume that the hyper-parameter $0 < \alpha_0 \leq |X|$. When the hierarchical null hypothesis $H_0^{\mathrm{hierar}}$ is true, there exists a constant $C_3>0$, which only depends on $|X|$ and $|Z|$, such that
$$
\mathrm{Pr}_{\mathrm{hierar}}\left(\mathrm{BF}\left(X|Y,Z\right) > b\right) \leq C_3\min \left\{\frac{1}{(\log(b)+1)n^{\lambda_0-3}}, \frac{1}{b}\right\},
$$ 
for any $b \geq 1$ and $\lambda_0$ as defined in (\ref{eq:lambda}). Thus,
$$
\mathrm{Pr}_{\mathrm{hierar}}\left(\mathrm{BF}\left(X|Y,Z\right) > 1\right) \leq \frac{C_3}{n^{\lambda_0-3}},
$$
and $\mathrm{BF}\left(X|Y,Z\right) \leq 1 \text{ a.s.}$ as $n \rightarrow \infty$ for $\lambda_0 > 4$ and $\alpha_0 \leq |X|$.
\end{corollary}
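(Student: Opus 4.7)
The plan is to exploit the fact that when $\alpha_0 \leq |X|$, the symmetric Dirichlet prior $\pi$ used in the denominator of the BF dominates any bounded density $f_j$ up to a multiplicative constant. This reduces the hierarchical-null problem to a ``matched-prior'' Bayesian null $H_0^B$ in which $p_j \sim \pi$ genuinely; under $H_0^B$ the standard Bayes-factor identity $E_{H_0^B}[\mathrm{BF}] = 1$ yields a Markov bound free of any polynomial-in-$n$ factor, which is precisely the improvement over Corollary~\ref{cor:bf1}.

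I would first establish the density-domination step. The Dirichlet density $\pi(p_j) = [\Gamma(\alpha_0)/\Gamma(\alpha_0/|X|)^{|X|}] \prod_{k=1}^{|X|} p_{j,k}^{\alpha_0/|X|-1}$ satisfies $\pi(p_j) \geq c_0 := \Gamma(\alpha_0)/\Gamma(\alpha_0/|X|)^{|X|}$ on the simplex, because $\alpha_0/|X| - 1 \leq 0$ and $p_{j,k} \leq 1$ together force $\prod_k p_{j,k}^{\alpha_0/|X|-1} \geq 1$. Combined with $f_j \leq M_0$ this gives $f_j \leq (M_0/c_0)\pi$ pointwise. Marginalizing the multinomial likelihood $\prod_i p_{z_i,x_i}$ against both priors propagates the same constant ratio to the marginals, so $\mathrm{Pr}_{\mathrm{hierar}}(X \mid Y,Z) \leq (M_0/c_0)^{|Z|} \mathrm{Pr}_{H_0^B}(X \mid Y,Z)$, and hence $\mathrm{Pr}_{\mathrm{hierar}}(A) \leq (M_0/c_0)^{|Z|} \mathrm{Pr}_{H_0^B}(A)$ for any event $A(X)$.

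Next I would obtain the two pieces of the $\min$ under $H_0^B$. The $1/b$ piece is immediate: under $H_0^B$ the prior in the denominator of the BF coincides with the true data-generating prior, so $E_{H_0^B}[\mathrm{BF}(X|Y,Z)] = \sum_X \mathrm{Pr}_{H_1}(X|Y,Z) = 1$, and Markov yields $\mathrm{Pr}_{H_0^B}(\mathrm{BF} > b) \leq 1/b$. For the logarithmic piece, I would split the BF using the slicing prior (\ref{eq:slice}): $\mathrm{BF} = (1-\pi_0)^{n-1} + \sum_{|S|\geq 2} \mathrm{Pr}(S(Y))\, \mathrm{BF}(X|Z,S(Y))$, note that the slicing-wise identity $E_{H_0^B}[\mathrm{BF}(X|Z,S(Y))] = 1$ makes the ``active'' remainder have mean $1 - (1-\pi_0)^{n-1} = O(n^{1-\lambda_0})$, and then invoke the logarithmic concentration argument from the proof of Theorem~\ref{thm:bf1} — now in the simpler setting where the matched Dirichlet prior eliminates the $n^{|Z|(|X|-1)}$ counting factor that was needed under conditional permutation. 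Multiplying by $(M_0/c_0)^{|Z|}$ and absorbing into $C_3$ gives the claimed bound.

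The almost-sure conclusion then follows by Borel--Cantelli: for $\lambda_0 > 4$, setting $b = 1$ gives $\mathrm{Pr}_{\mathrm{hierar}}(\mathrm{BF} > 1) \leq C_3/n^{\lambda_0-3}$, which is summable in $n$, so $\mathrm{BF} \leq 1$ eventually almost surely. The main obstacle is the $(\log b+1)^{-1}$ piece: pure Markov on $\mathrm{BF}$ delivers only $1/b$, and the log-scale improvement requires the finer moment/decomposition argument on non-trivial slicings sketched above, tracking how the remainder's mean scales like $n^{1-\lambda_0}$ and how its tail concentrates once the prior matches — essentially re-running the core technique of the proof of Theorem~\ref{thm:bf1} but in the matched-prior regime where the counting factor drops out.
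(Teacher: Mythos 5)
Your proposal is correct and follows essentially the same route as the paper: you establish $f_j \leq (M_0/m_0)\pi$ from the pointwise lower bound $\pi(p_j)\geq m_0$ when $\alpha_0\leq|X|$, conclude $\mathrm{Pr}_{\mathrm{hierar}}(\cdot\mid\cdot)\leq A_0\,\Psi(\cdot)$ with $A_0=(M_0/m_0)^{|Z|}$, and then rerun the Theorem~\ref{thm:bf1} machinery (the $\max_{s,t}\Delta(s,t)$ union bound for the logarithmic piece and the Markov bound for the $1/b$ piece) with the polynomial counting factor $n^{|Z|(|X|-1)}$ replaced by the constant $A_0$. This is exactly the paper's proof, which states the domination inequality and then says the rest follows the argument of Theorem~\ref{thm:bf1} verbatim.
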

\noindent Notably, compared with Theorem~\ref{thm:bf1} and Corollary~\ref{cor:bf1}, the finite-sample bound in Corollary~\ref{cor:bf2} depends on the number of categories $|X|$ and $|Z|$ only through $C_3$, which is a constant with respect to sample size $n$ and cutoff $b$.

Next, we show that under $H_1$, $\mathrm{BF}\left(X|Y,Z\right)$ goes to infinity with an exponential rate proportional to  the sample size and the conditional mutual information between $X$ and $Y$ given $Z$, $\text{MI}\left(X,Y|Z\right)$.
\begin{theorem}\label{thm:bf2}
Assume that hyper-parameters $\alpha_0$ and $\lambda_0$ as defined in (\ref{eq:dir}) and (\ref{eq:lambda}) satisfying $0 < \alpha_0 \leq |X|$, $\lambda_0 \geq 1$ and $\lambda_0 = o(n^{\frac{1}{3}}/\log(n))$. Under the regularity condition in Appendix~\ref{app:thm2},
$$
\mathrm{Pr}\left(\mathrm{BF}\left(X|Y,Z\right) \geq e^{n\left[\mathrm{MI}(X,Y|Z) - \delta(n)\right]}\right) \geq  1-4n^{-\frac{1}{32}\log(n)},
$$
where
$$
\delta(n) = O\left(\frac{(\lambda_0+|Z|(|X|-1.5+\gamma_0)/3)|Z|\log(n)}{n^{1/3}}\right) \rightarrow 0
$$ 
as $n \rightarrow \infty$. Thus, $\mathrm{BF}\left(X|Y,Z\right) \geq e^{n\left[\mathrm{MI}(X,Y|Z) - \epsilon\right]}$ a.s. for any $\epsilon>0$ as $n \rightarrow \infty$. 
\end{theorem}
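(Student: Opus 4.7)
The plan is to lower bound the Bayes factor in (\ref{eq:bf}) by a single term in the sum, corresponding to a carefully chosen deterministic ``oracle'' slicing $S^{*}$, thereby reducing the problem to analyzing the Dirichlet-multinomial likelihood ratio under that one discretization. Concretely, I would take $S^{*}$ to be a quantile-based slicing that splits the $n$ ordered observations into $K = \lceil n^{1/3} \rceil$ slices of approximately equal size, so that each $(z{=}j, S{=}h)$ cell contains $\Theta(n^{2/3}/|Z|)$ samples with high probability. From (\ref{eq:bf}) we obtain the deterministic bound
$$
\mathrm{BF}(X\mid Z,Y) \;\geq\; \mathrm{BF}(X\mid Z,S^{*}(Y))\,\mathrm{Pr}(S^{*}(Y)),
$$
and the remainder of the argument is devoted to showing that the logarithm of the right-hand side exceeds $n[\mathrm{MI}(X,Y\mid Z) - \delta(n)]$ with probability at least $1-4n^{-\log(n)/32}$.

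The second step is a Stirling expansion $\log\Gamma(x) = (x-\tfrac12)\log x - x + O(1)$ applied to every Gamma factor in the Dirichlet-multinomial marginals for $H_1$ and $H_0$. Using the identities $\sum_h n_j^{(h)} = n_j$ and $\sum_h n_{j,k}^{(h)} = n_{j,k}$, the dominant cross-terms telescope to yield
$$
\log\mathrm{BF}(X\mid Z,S^{*}(Y)) \;=\; n\,\widehat{\mathrm{MI}}_n\!\left(X, S^{*}(Y)\mid Z\right) \;-\; R_n,
$$
where $\widehat{\mathrm{MI}}_n$ is the empirical conditional mutual information of the discretized data and $R_n = O\bigl((|X|-1.5+\gamma_0)|Z|K\log n\bigr)$ collects the Stirling bias, the Dirichlet pseudo-count corrections, and the Euler-Mascheroni constant (the same object that surfaces in Corollary~\ref{cor:bf1}). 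The slicing prior contributes $\log\mathrm{Pr}(S^{*}(Y)) = -(K-1)\lambda_0\log n + (n-K)\log(1-\pi_0) = -\lambda_0 K\log n + O(n^{1-\lambda_0})$, which for $K=n^{1/3}$ and the stated $\lambda_0 = o(n^{1/3}/\log n)$ is negligible relative to $n\,\mathrm{MI}(X,Y\mid Z)$.

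The third step is a concentration argument. Conditional on $S^{*}$, each $n_{j,k}^{(h)}$ is a sum of i.i.d.\ Bernoulli variables, so Bernstein's (or Hoeffding's) inequality combined with a union bound over the $|X||Z|K$ cells yields that every empirical frequency $\hat p_{j,k}^{(h)}$ sits within $O(\sqrt{K\log n/n})$ of its population value on an event of probability at least $1-4n^{-\log(n)/32}$ once the deviation parameter is tuned. A second-order Taylor expansion of $p\mapsto p\log p$ then converts this cell-level concentration into
$$
\widehat{\mathrm{MI}}_n(X,S^{*}(Y)\mid Z) \;=\; \mathrm{MI}(X,S^{*}(Y)\mid Z) \;+\; O\!\left(\frac{|Z|(|X|-1.5+\gamma_0)\log n}{n^{1/3}}\right)
$$
on the same event. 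Finally, the regularity condition in Appendix~\ref{app:thm2} (a smoothness/bounded-density hypothesis on the conditional law of $X$ given $Y,Z$) gives the discretization gap $\mathrm{MI}(X,Y\mid Z) - \mathrm{MI}(X,S^{*}(Y)\mid Z) = O(1/K) = O(n^{-1/3})$. Summing the Stirling-plus-prior, concentration, and discretization errors yields the announced order of $\delta(n)$, and the almost-sure conclusion follows from Borel-Cantelli since $n^{-\log(n)/32}$ is summable.

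The main obstacle I anticipate is aligning constants so that the exceptional probability is exactly $4n^{-\log(n)/32}$: this forces the Hoeffding deviation parameter to be $\Theta(\log n/\sqrt{n/K})$ and requires careful accounting of the $O(|X||Z|n^{1/3})$ cells in the union bound. A secondary difficulty is controlling near-empty cells at slice boundaries so the logarithms in the Stirling expansion stay well-defined; this is where the regularity assumption does real work, guaranteeing that every cell count is $\Omega(n^{2/3}/(|Z|\log n))$ with the required probability, after which the bookkeeping becomes routine.
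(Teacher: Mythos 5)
Your plan is essentially the paper's own argument: lower-bound $\mathrm{BF}(X|Y,Z)$ by the single term for one well-chosen slicing, use log-gamma/Stirling bounds to reduce $\log\mathrm{BF}(X|Z,S^{*}(Y))$ to $n$ times the empirical conditional mutual information minus an $O(|S|\log n)$ correction, absorb the prior penalty $\pi_0^{|S|-1}(1-\pi_0)^{n-|S|}$ using $\lambda_0\ge 1$, show the empirical discretized MI approaches $\mathrm{MI}(X,Y|Z)$ with probability $1-4n^{-\log(n)/32}$, and finish with Borel--Cantelli. The only divergence is cosmetic: the paper packages your concentration-plus-discretization step into Lemma~\ref{lem:bf} (proved in the supplement) using a slicing with $O(|Z|n^{2/3})$ slices rather than your $\lceil n^{1/3}\rceil$ quantile slices, a parameter choice that changes the individual error terms but not the overall $O(\log(n)/n^{1/3})$ rate or the structure of the proof.
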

\noindent Note that the conditional mutual information $\mathrm{MI}(X,Y|Z) > 0$ if and only if $X$ and $Y$ are not conditionally independent given $Z$. Theorem~\ref{thm:bf1} and~\ref{thm:bf2} guarantee the consistency of the BF statistic in testing dependence given any finite threshold.

The requirement of $\lambda_0 \geq 1$ in Theorem~\ref{thm:bf2} is sufficient but not necessary. In Section~\ref{sec:lambda}, through simulation studies, we show that the BF statistic can approach infinity as sample size increases under some $\lambda_0 < 1$. However, when the value of $\lambda_0$ is small enough, the BF statistic will converge to zero as shown in Figure~\ref{fig:lambda}. Intuitively, this phenomenon can be explained by the fact that too much weight is given to configurations with bad slicings (swamped by the ``entropy'' effect). On the other hand, when $\lambda_0$ as defined in (\ref{eq:lambda}) is large relative to the sample size (implying a very small $\pi_0$), the $\delta(n)$ term in Theorem~\ref{thm:bf2} will no longer converge to zero and the BF statistic will not be able to differentiate $H_1$ from $H_0$. For example, one can show that when $\lambda_0=O(n)$, the BF statistic $\mathrm{BF}\left(X|Y,Z\right) \rightarrow 1$ almost surely as $n \rightarrow \infty$. Unlike the DS statistic, which is monotonically increasing as $\lambda_0$ becomes smaller, the relationship between the BF statistic and the hyper-parameter $\lambda_0$ is not monotonic. In the following section, we study the sensitivity of the BF statistic and its type-I error to the choice of $\lambda_0$ based on numerical simulations.

\begin{figure}[ht]
\begin{subfigure}{
\includegraphics[width=0.33\textwidth, angle=-90]{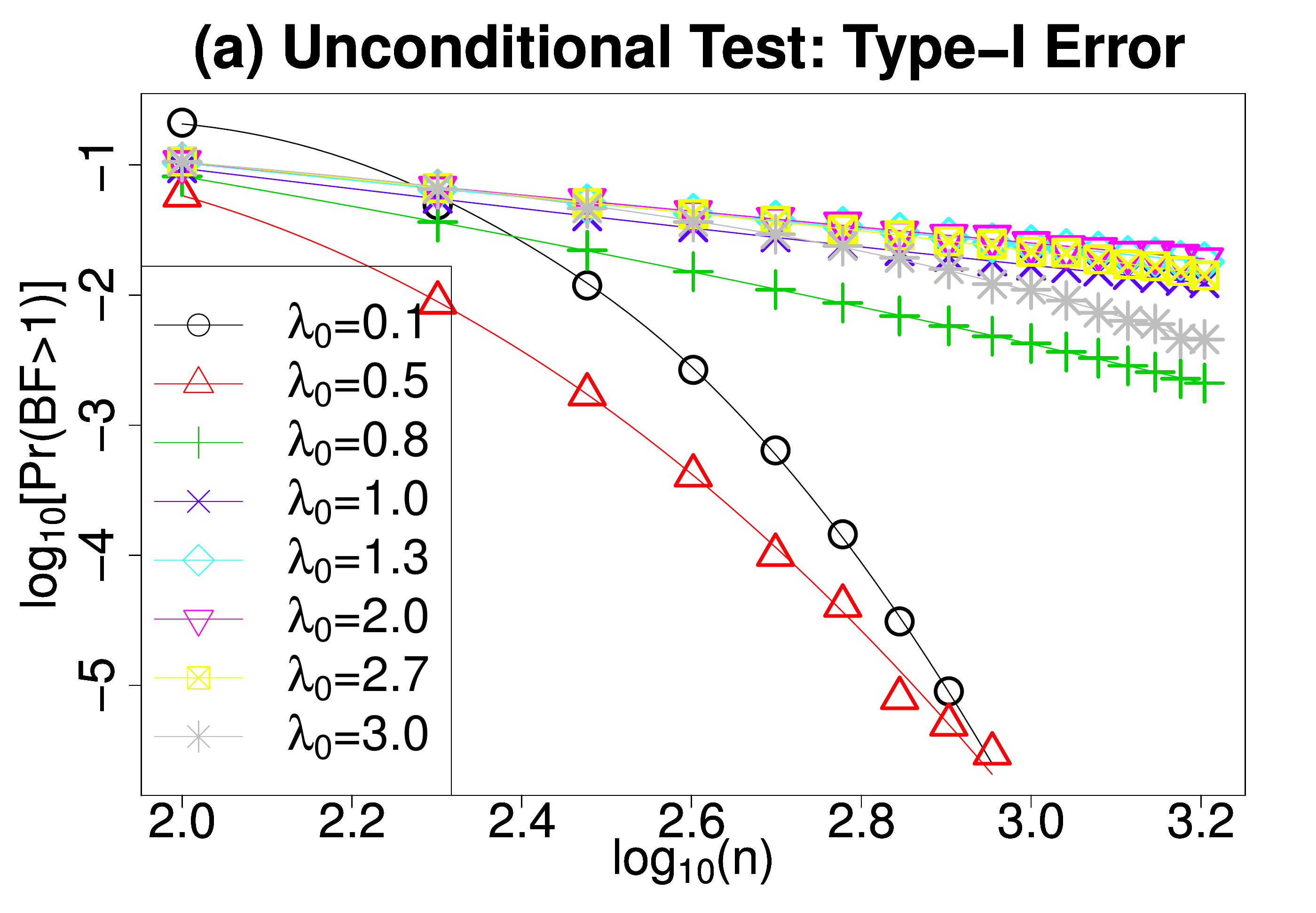}}
\end{subfigure}
\begin{subfigure}{
\includegraphics[width=0.33\textwidth, angle=-90]{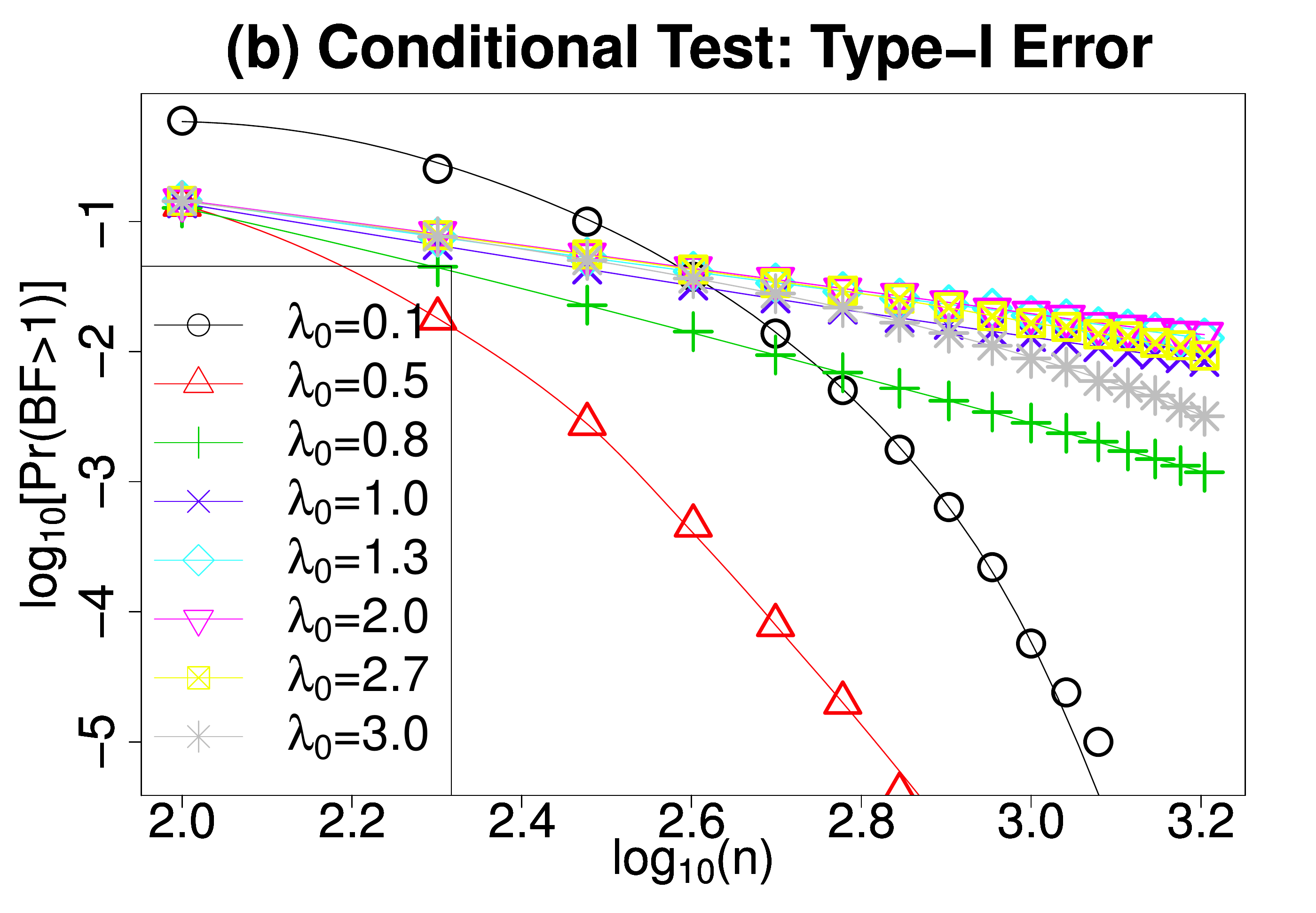}}
\end{subfigure}
\begin{subfigure}{
\includegraphics[width=0.33\textwidth, angle=-90]{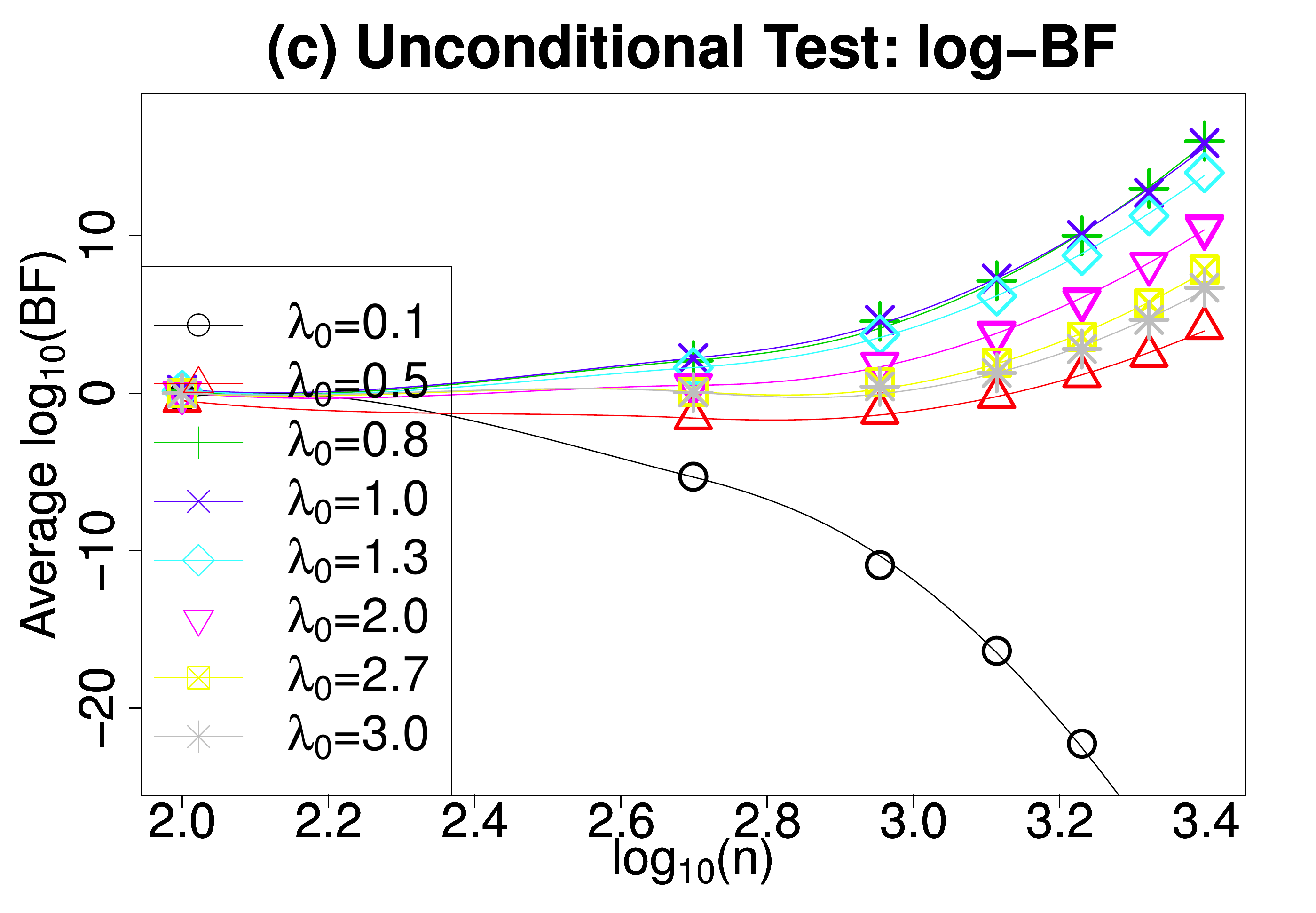}}
\end{subfigure}
\begin{subfigure}{
\includegraphics[width=0.33\textwidth, angle=-90]{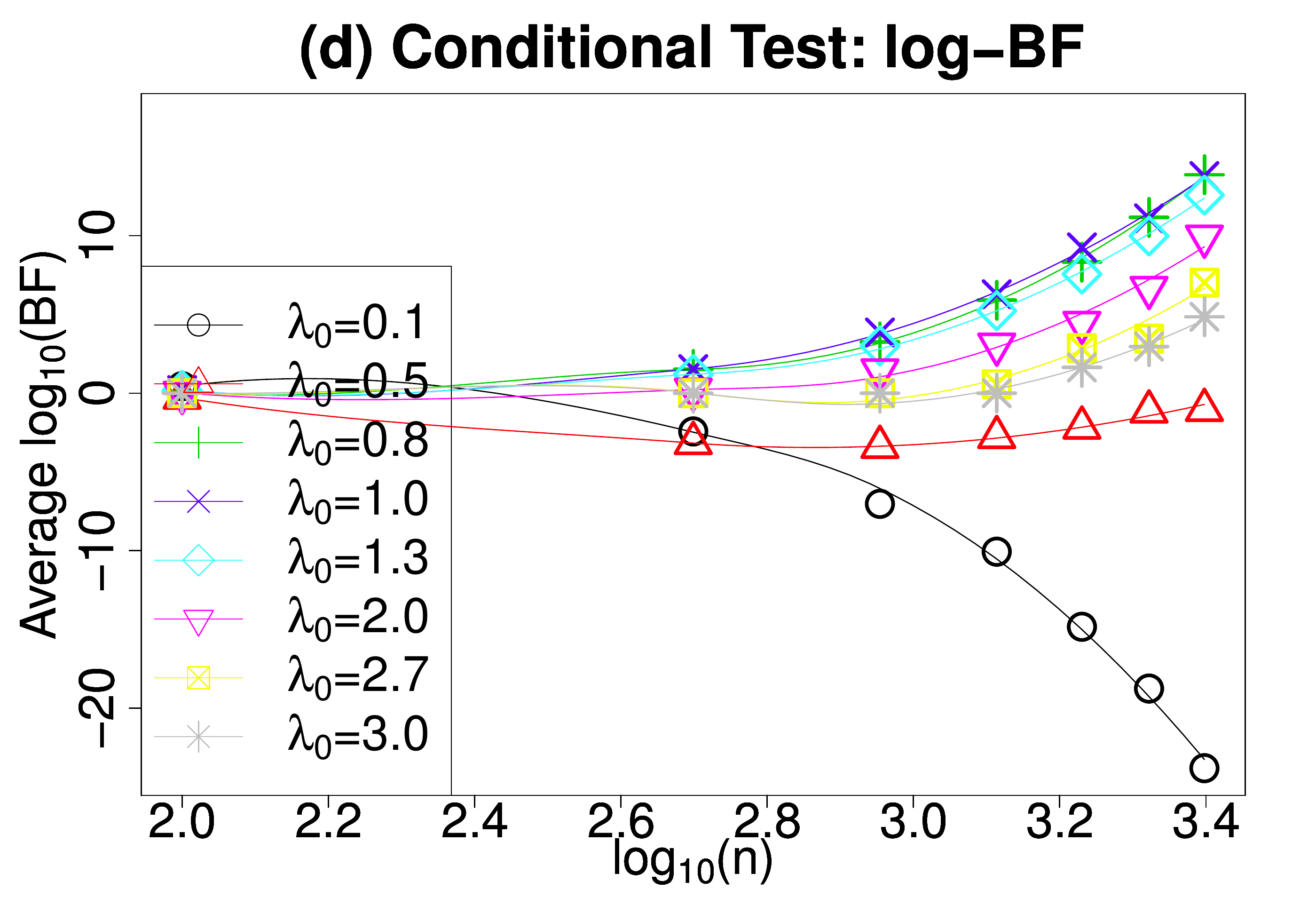}}
\end{subfigure}
\centering
\caption{Sensitivity of type-I error (corresponding to a cutoff of $1$) to different values of $\lambda_0$ for unconditional (a) and conditional (b) BF statistics,
and average logarithm of unconditional (c) and conditional (d) BF statistics given different values of $\lambda_0$ when an alternative hypothesis is true. The lines connecting the points are from the LOESS fit. We fixed $\alpha_0=1$ in this analysis.}
\label{fig:lambda}
\end{figure}

\subsection{Choice of hyper-parameter $\lambda_0$}\label{sec:lambda}

Theorems $1$ and $2$ suggests that we should generally choose hyper-parameters  $\alpha_0$ and $\lambda_0$ as defined in (\ref{eq:dir}) and (\ref{eq:lambda}) such that $\alpha_0 \leq |X|$ and $\lambda_0 \geq 1$. Using numerical simulations, we further study the sensitivity of the BF statistic and its type-I error to the choice of hyper-parameter $\lambda_0$ in both unconditional and conditional dependence tests. For unconditional test, we generate equal number observations with binary indicator $X=0$ and $X=1$, and simulate the continuous response $Y \sim N(\mu X, 1)$ with $\mu = 0.4$. For conditional test, we generate binary covariates $X$ and $Z$ independently, and simulate the response $Y \sim N(\mu X + \mu Z, 1)$ with $\mu = 0.4$. We calculate the average logarithmic value of BF statistics under the alternative hypothesis, as well as type-I error of BF statistic given a cutoff of $1$ under the null hypothesis, which is obtained by shuffling the observed values of $X$ (while retaining the association between $Z$ and $Y$ for conditional test). We use $\alpha_0=1$ in all the simulations in this section and will conduct a sensitivity analysis on $\alpha_0$ in Section $3.1$.

Figure~\ref{fig:lambda} shows the type-I error (given a cutoff of $1$) and average logarithmic value of BF statistic under varying sample size $n$. As we can see, type-I errors under different sample sizes are insensitive to a wide range of $\lambda_0$ from $1$ to $3$. Furthermore, under the alternative hypothesis, values of BF statistics are comparable for choice of $\lambda_0$ between $0.8$ and $1.3$. 
We also observed that the type-I error of the critical region $\{BF>b\}$ is not monotonic to the value of $\lambda_0$. For example, given the same sample size $n$ and a critical value $b=1$, the BF statistic with $\lambda_0=2$ has a larger type-I error than that with $\lambda_0=1$ and $\lambda_0=3$. On the other hand, the BF statistic with $\lambda_0=2$ is on average smaller than the BF statistic with $\lambda_0=1$ when $X$ and $Y$ are (unconditionally or conditionally) dependent. Finally, we note that when $\lambda_0$ is too small (\emph{e.g.}, $0.1$), which results in a relatively large  $\pi_0$ ($\approx n^{-\lambda_0}$) and a large number of expected slices (\emph{i.e.}, $n^{1-\lambda_0}$), the logarithm of BF tends to negative infinity even when $H_1$ is true. Furthermore, it appears that as we vary $\lambda_0$ from 1 to 0, the ``phase transition'' phenomena (\emph{i.e.}, the logarithm of BF diverges to positive infinity versus negative infinity)  occurs at around 0.5. 
Given these observations, unless noted otherwise, we will choose $\lambda_0=1$ and $\alpha_0=1$ (see Section $3.1$ for simulation results using different $\alpha_0$'s) for the following studies.

\begin{figure}[h]
\begin{subfigure}{
\includegraphics[width=0.33\textwidth, angle=-90]{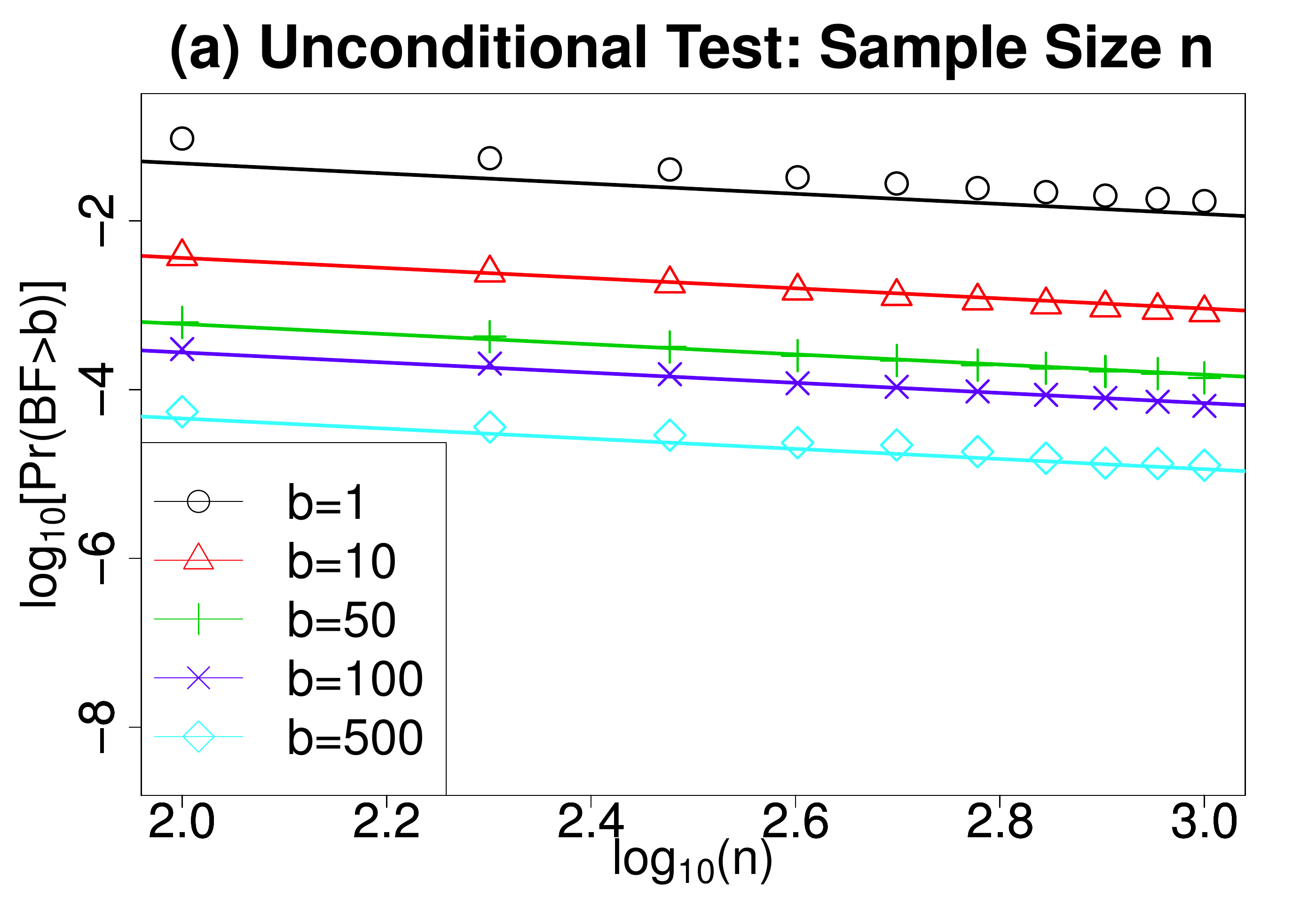}}
\end{subfigure}
\begin{subfigure}{
\includegraphics[width=0.33\textwidth, angle=-90]{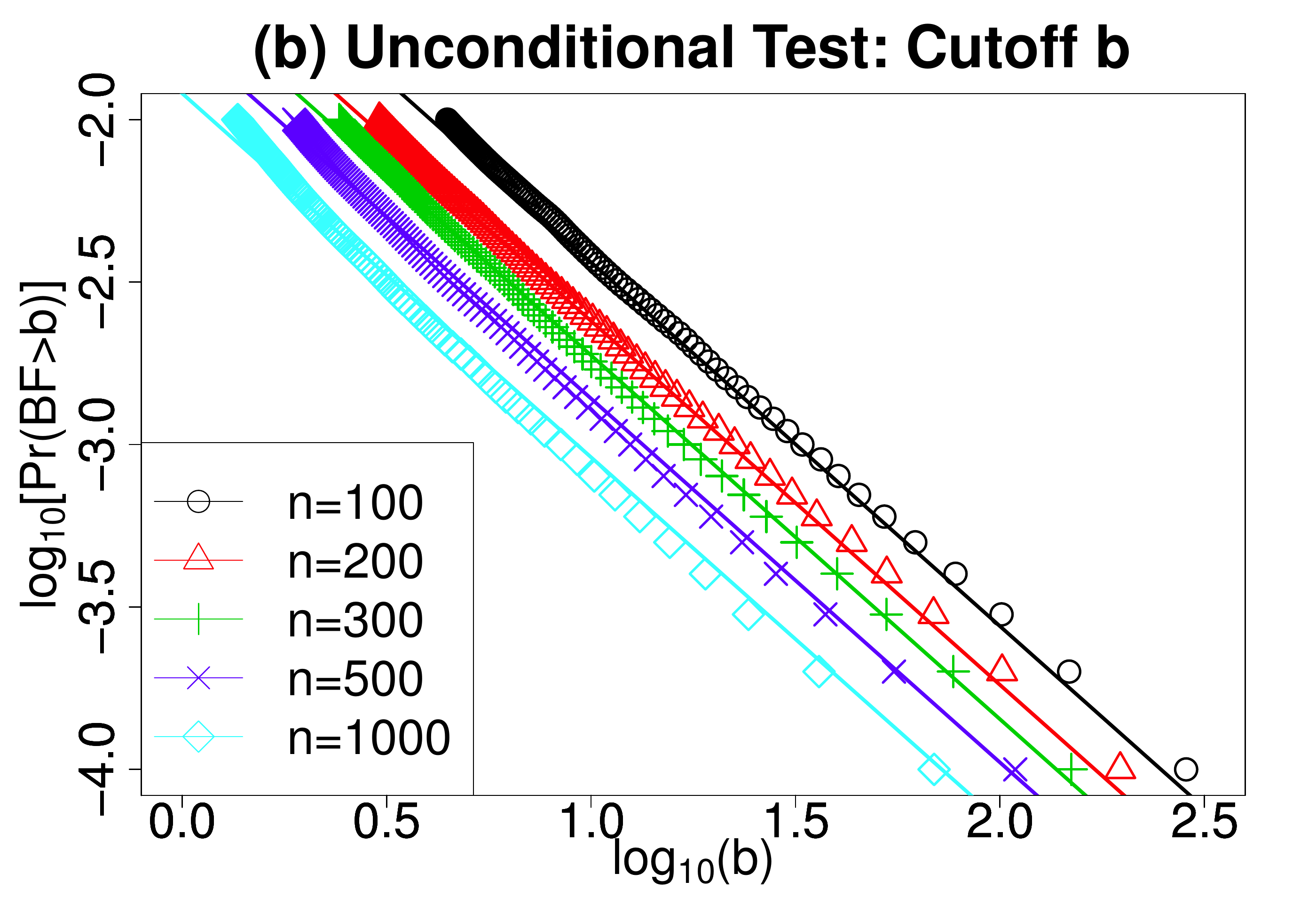}}
\end{subfigure}
\begin{subfigure}{
\includegraphics[width=0.33\textwidth, angle=-90]{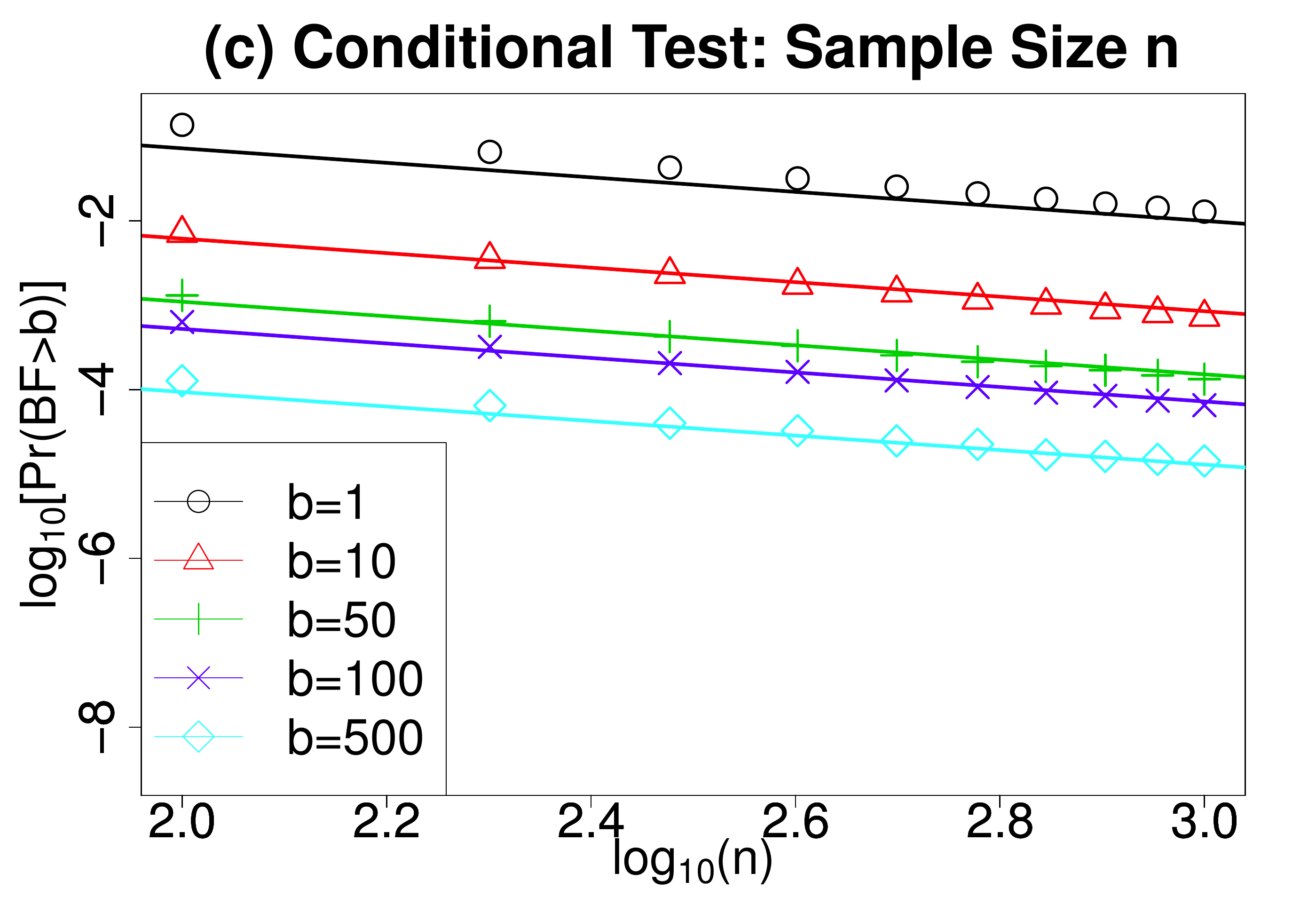}}
\end{subfigure}
\begin{subfigure}{
\includegraphics[width=0.33\textwidth, angle=-90]{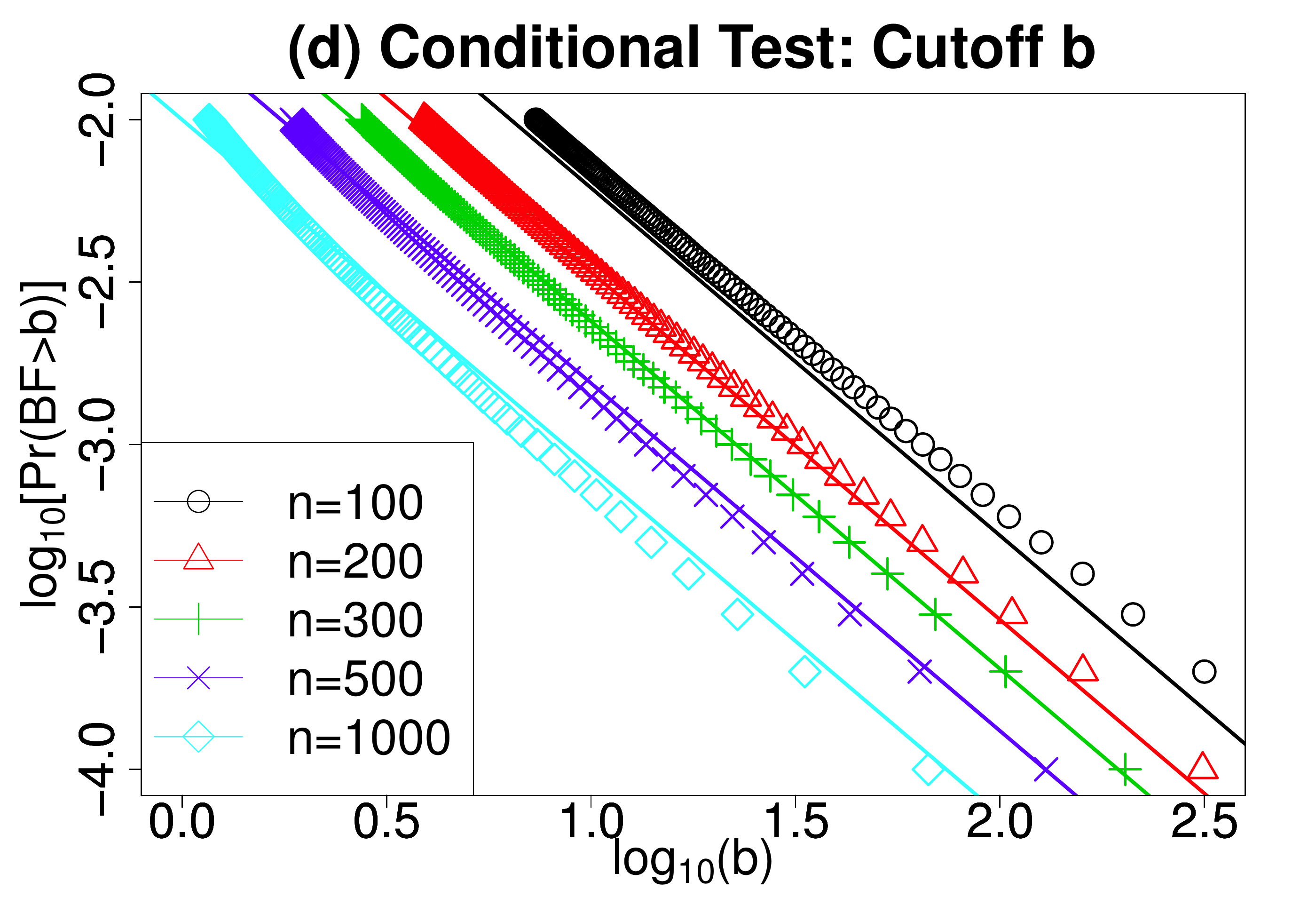}}
\end{subfigure}
\centering
\caption{Empirical fitting of type-I error $\mathrm{Pr}_{\mathrm{shuffle}}\left(\mathrm{BF}\left(X|Y,Z\right) > b\right) $ given sample size $n$ and cutoff $b$ for unconditional and conditional test. Straight lines in (a)-(b) and (c)-(d) are calculated from empirical formula (\ref{eq:fit1}) and (\ref{eq:fit2}), respectively.}
\label{fig:typeI}
\end{figure}

\subsection{Empirical formulas for type-I errors}

Theorem $1$ provides finite-sample  bounds for type-I errors under the conditional
 shuffling scheme, that is,
$$
\mathrm{Pr}_{\mathrm{shuffle}}\left(\mathrm{BF}\left(X|Y,Z\right) > b\right) \leq C_1n^{|Z|(|X|-1)} \min \left\{\frac{1}{(\log(b)+1)n^{\lambda_0-3}}, \frac{1}{b}\right\}.
$$ 
Based on numerical  simulations, we found that the relationship between the value of BF statistic and its significance level can be further refined. Specifically, we simulate observations for both conditional and unconditional tests using the same procedure as described in Section $2.4$ with binary covariate $X$ (or $Z$) and varying sample size $n$. Then, we calculate the BF statistic with hyper-parameters $\lambda_0=1$ and $\alpha_0=1$ on shuffled samples. For unconditional test (\emph{i.e.} $Z \equiv 0$), we obtain the following empirical formula of type-I error given the cutoff of BF statistic $b$ and sample size $n$:
\begin{equation}\label{eq:fit1}
\mathrm{Pr}_{\mathrm{shuffle}}\left(\mathrm{BF}\left(X|Y,Z\equiv0\right) > b\right) \approx
\frac{\gamma_p}{b^{\alpha_p}n^{\beta_p}},
\end{equation}
where $\alpha_p$, $\beta_p$ and $\gamma_p$ only depend on $p$, the proportion of observations with $X=1$. For example, when $p=0.5$, $\alpha_{0.5} \approx 1.12$, $\beta_{0.5} \approx 0.6$ and $\gamma_{0.5} \approx 0.76$ for $|X|=2$ and $|Z|=1$. Figure~\ref{fig:typeI}(a)-(b) illustrate the fitting between empirical formula (\ref{eq:fit1}) and observed values of type-I error when $p=0.5$. Fitted $\alpha_p$, $\beta_p$ and $\gamma_p$ for other values of $p$ are given in the online supplement \citep{jiang2015}.

Moreover, for conditional test, we obtain the following empirical formula of type-I error given the cutoff of BF statistic $b$ and sample size $n$:
\begin{equation}\label{eq:fit2}
\mathrm{Pr}_{\mathrm{shuffle}}\left(\mathrm{BF}\left(X|Y,Z\right) > b\right) \approx
\frac{\gamma_{\bold{f}}}{ b^{\alpha_{\bold{f}}} n^{\beta_{\bold{f}}} },
\end{equation}
where $\alpha_{\bold{f}}$, $\beta_{\bold{f}}$ and $\gamma_{\bold{f}}$ only depend on $\bold{f}$, the vector of observed frequencies for configurations of $(X,Z)$. For example, given $|X|=|Z|=2$ and $\bold{f}=(0.25,0.25,0.25,0.25)$, $\alpha_{\bold{f}} \approx 1.07$, $\beta_{\bold{f}} \approx 0.86$, and $\gamma_{\bold{f}} \approx 3.8$. Figure~\ref{fig:typeI}(c)-(d) illustrate the fitting between empirical formula (\ref{eq:fit2}) and observed values of type-I error. These fitting formulas are useful when one has to deal with many similar hypotheses simultaneously or is interested in very small p-values.

\subsection{Forward stepwise selection based on the BF statistic}

Given a continuous response $Y$ and a set of categorical covariates  $\{X_j\}_{j=1}^m$, variable selection procedures aim to select a subset of covariates indexed by $\mathcal{A}$ such that $\{X_j:j\in\mathcal{A}\}$ are associated with the response $Y$ while the other covariates $\{X_j:j\notin\mathcal{A}\}$ are independent of $Y$ given $\{X_j:j\in\mathcal{A}\}$. Here, we propose to use a forward stepwise procedure based on conditional BF statistic, preceded by an independent screening stage based on unconditional BF statistic. Throughout this paper, we assume that the number of categorical covariates, $m$, is fixed and does not increase with sample size $n$. 

\begin{algorithm}\label{alg:step} \ 
\normalfont 
\begin{itemize}
\item \emph{Independent Screening}: calculate unconditional BF statistic denoted as $\mathrm{BF}\left(X_j\mid Y,Z\equiv0\right)$ for $j=1,\ldots,m$. Let $\mathcal{B}$ denote the index set of covariates with the corresponding BF statistics larger than a pre-specified threshold $b_0$, and $j_0 = \underset{j \in \{1,\ldots,m\}}{\operatorname{argmax}} \left\{\mathrm{BF}\left(X_j|Y,Z\equiv0\right)\right\}$. Proceed if $\mathrm{BF}\left(X_{j_0}|Y,Z\equiv0\right) > b_0$ (\emph{i.e.} $\mathcal{B} \neq \emptyset$).
\item \emph{Forward Stepwise Selection}: let $\mathcal{C}_t$ denote the index set of covariates that have been selected at iteration $t$. Initialize $\mathcal{C}_1={j_0}$ and $Z_1=X_{j_0}$, and iterate the following steps for $t \geq 2$:
\begin{itemize}
\item At iteration $t$ ($t \geq 2$), encode the configurations of selected variables in $\mathcal{C}_{t-1}$ into a ``super'' variable $Z_{t-1}$.
\item Calculate conditional BF statistic $\mathrm{BF}\left(X_j|Y,Z_{t-1}\right)$ for $j \in \mathcal{B}-\mathcal{C}_{t-1}$, and let $j_t = \underset{j \in \mathcal{B}-\mathcal{C}_{t-1}}{\operatorname{argmax}}\left\{\mathrm{BF}\left(X_j|Y,Z_{t-1}\right)\right\}$.
\item Let $\mathcal{C}_t = \mathcal{C}_{t-1} \cup \{j_t\}$ if $\mathrm{BF}\left(X_{j_t}|Y,Z_{t-1}\right) > b_t$. Otherwise, stop and output $\mathcal{C}_{t-1}$.
\end{itemize}
\end{itemize}
\end{algorithm}
\noindent We may decide the threshold $b_t$ at iteration $t$ according to our prior belief in the null hypothesis or a pre-specified interpretation on the relationship between Bayes factor and strength of evidence. For example, \cite{kass1995} viewed a Bayes factor of $>150$ as very strong evidence against the null hypothesis. Alternatively, we can choose the threshold $b_t$ to control for type I errors. Specifically, at each iteration, we estimate the null distribution of the maximum BF statistics under $H_0$ by using a conditional permutation scheme as follows:
\begin{itemize}
\item To generate a permuted data set at iteration $t$, shuffle the observed values of $Y$ within each group of observations indexed by $\{i : z_{t-1,i} = k\}$, independently for $k \in \{1,\ldots,|Z_{t-1}|\}$.
\item Estimate a null distribution of $\mathrm{BF}\left(X_{j_t}|Y,Z_{t-1}\right)$ by calculating the maximum of BF statistics for $j \in \mathcal{B}-\mathcal{C}_{t-1}$ on each permuted data set.
\end{itemize}
Then, we can use the empirical null distribution to calculate a $p$-value for the observed value of $\mathrm{BF}\left(X_{j_t}|Y,Z_{t-1}\right)$, and terminate the iterative variable selection procedure if the $p$-value is larger than a threshold (\emph{e.g.}, $0.05$).

\section{Simulation studies}

\subsection{Unconditional dependence testing}

We first compare different methods in testing unconditional dependence between a binary indicator $X$ and a continuous response $Y$. Note that this testing problem is equivalent to the classic two-sample testing problem.  Methods under  comparison considerations include: the BF statistic with hyper-parameters $\lambda_0=1$ and $\alpha_0=1$ or $2$  (which we call ``BF ($\alpha_0=1$ or $2$)''), dynamic slicing (``DS'') test statistic \citep[see online supplement][for details]{jiang2015}, the Wilcoxon rank-sum test \citep[``rank-sum''; also known as the Mann-Whitney $U$ test;][]{wilcoxon1945,mann1947}, Welch's $t$-test \citep[``$t$-test'';][]{welch1947}, Kolmogorov-Smirnov (``KS'') test and Anderson-Darling (``AD'') test \citep{anderson1952}. The null hypothesis of Welch's $t$-test is that the means of two normally distributed populations are equal (but with possibly unequal variance), and the null hypothesis of rank-sum test is that the probability of an observation from one population exceeding an observation from the second population equals to $0.5$. All other methods test the null hypothesis that the distributions of two populations are the same against a completely general alternative hypothesis that the binary indicator $X$ and the quantity of interest $Y$ are not independent.


\begin{figure}[ht]
\begin{subfigure}{
\includegraphics[width=0.33\textwidth, angle=-90]{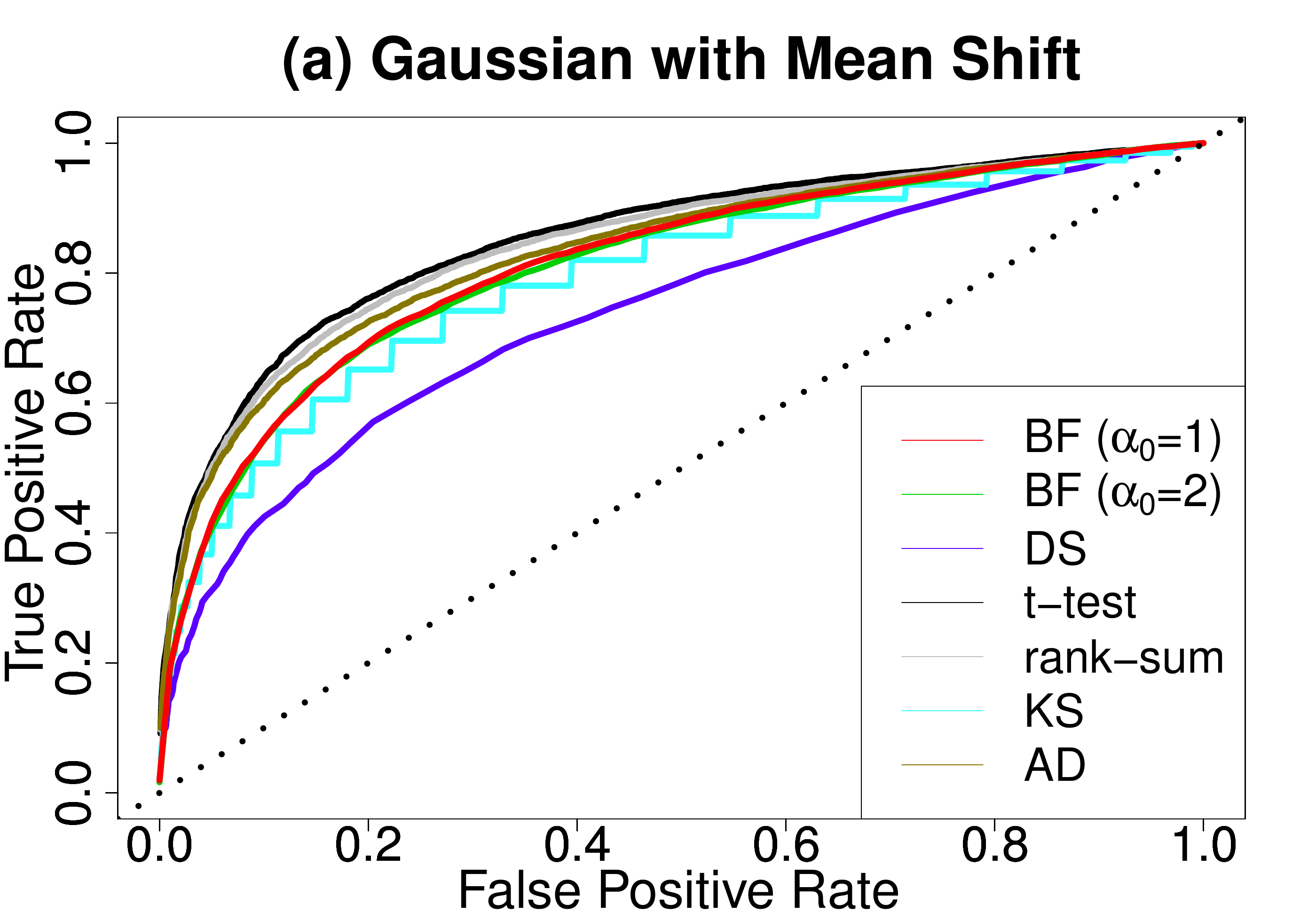}}
\end{subfigure}
\begin{subfigure}{
\includegraphics[width=0.33\textwidth, angle=-90]{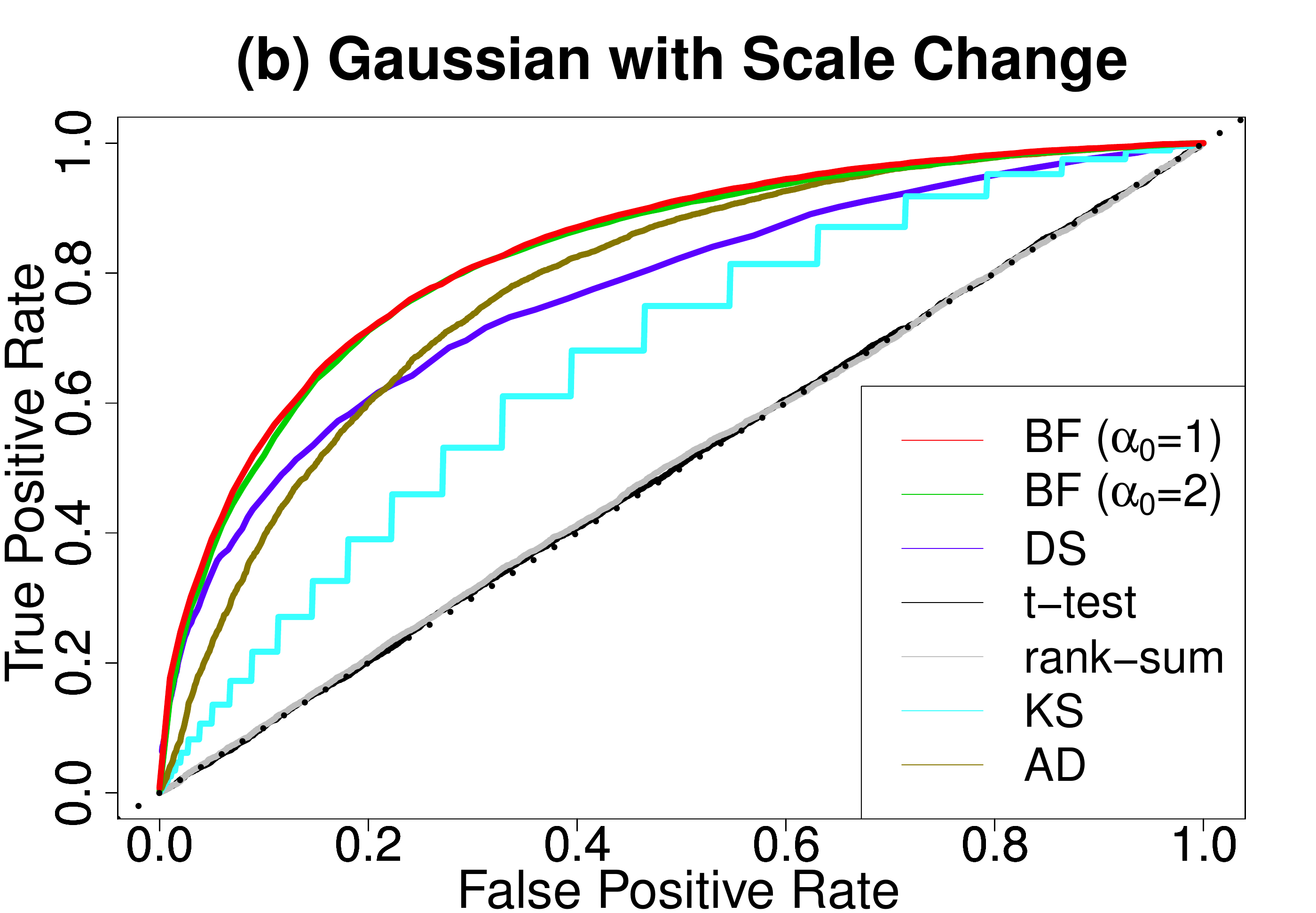}}
\end{subfigure}
\begin{subfigure}{
\includegraphics[width=0.33\textwidth, angle=-90]{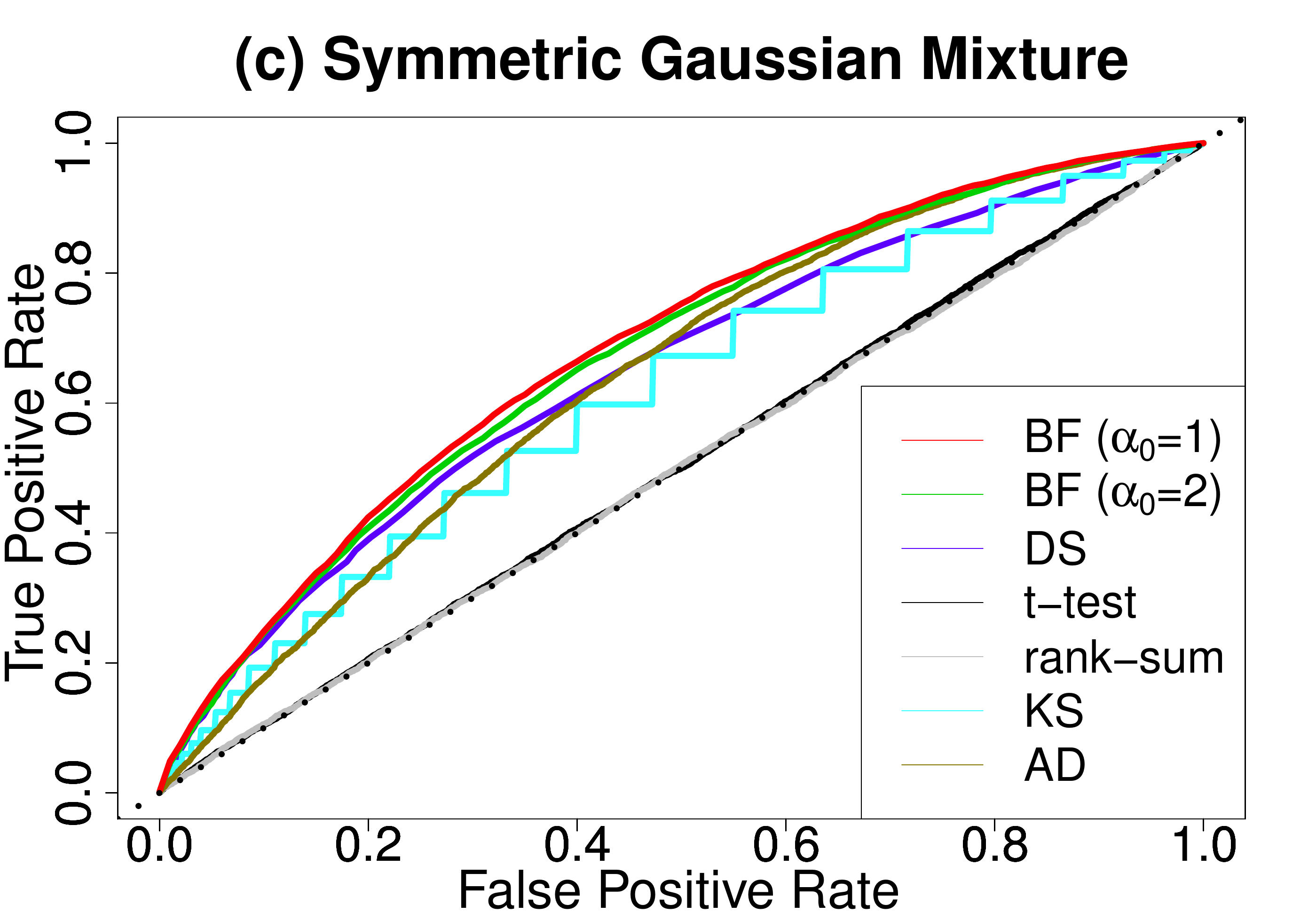}}
\end{subfigure}
\begin{subfigure}{
\includegraphics[width=0.33\textwidth, angle=-90]{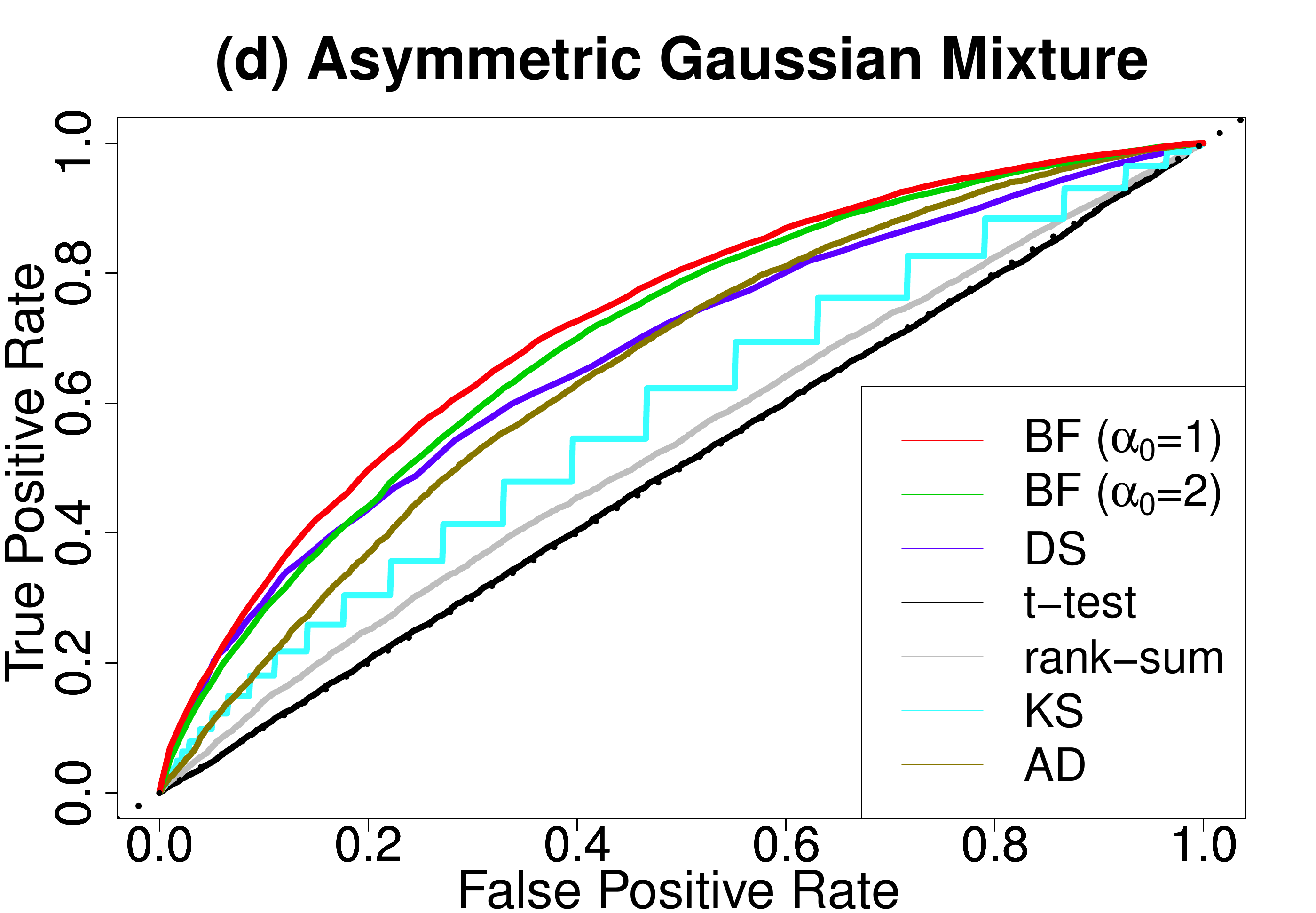}}
\end{subfigure}
\centering
\caption{\label{fig:roc0} The ROC curves that compare true positive rate, the fraction of true positives out of the total actual positives, and false positive rate,  the fraction of false positives out of the total actual negatives, of different methods in Scenarios $1$-$4$ of Section~$3.1$.}
\end{figure}

We generated binary variable $X \sim \text{Bern}(0.5)$, and simulated the continuous variable $Y$ under the alternative hypothesis according to following scenarios with sample size $n=400$:
\begin{itemize}
\item[] Scenario $1$ (Gaussian with mean shift):  
$Y \sim N(-\mu,1) \text{ when } X=0; \text{ and } Y \sim N(\mu,1) \text{ when } X=1$; $\mu = 0.1$.
\item[] Scenario $2$ (Gaussian with scale change):  
$Y \sim N(0,1) \text{ when } X=0; \text{ and } Y \sim N(0,\sigma^2) \text{ when } X=1$; $\sigma =1.2$.
\item[] Scenario $3$ (Symmetric Gaussian mixture):  \\
$Y \sim \text{a mixture of } N(-\mu,1) \text{ and } N(\mu,1)$ $\text{with probabilities $1-\theta$ and }  \\ \text{$\theta$ when } X=0; \text{ and }  Y \sim N\left((2\theta-1)\mu,1+4\theta(1-\theta)\mu^2\right)$ $\text{when } X=1$; $\theta=0.5$ and $\mu = 1.2$.
\item[] Scenario $4$ (Asymmetric Gaussian mixture):  
same as Scenario $3$ except that $\theta=0.9$.
\end{itemize}
The receiver operating characteristic (ROC) curves in Figure~\ref{fig:roc0} illustrate how  true positive rates, the fraction of true positives out of the total actual positives, trade against false positive rates,  the fraction of false positives out of the total actual negatives, of different methods at varying thresholds.
In Scenario $1$, two populations corresponding to $X=0$ and $X=1$ follow Gaussian distributions with different means but the same variance, which satisfies all the parametric assumptions of the two-sample $t$-test. As expected, in Figure~\ref{fig:roc0}(a), Welch's $t$-test achieved the highest power in this scenario, which was followed closely by the rank-sum test and the Anderson-Darling test. The BF statistic had slightly lower power under this scenarios but still outperformed the Kolmogorov-Smirnov test. The dependence test based on dynamic slicing had the lowest power in this case.

When two Gaussian populations have the same mean but different variances (Scenario $2$), the BF statistic had a superior power compared with others in Figure~\ref{fig:roc0}(b). Among the other methods, the Anderson-Darling test and  dynamic slicing (DS) test outperformed the Kolmogorov-Smirnov tests, while the rank-sum test and $t$-test had almost no power under this scenario. 

Scenarios $3$ and $4$ demonstrate the performances of different methods when two populations have both the same mean and the same variance, but different skewness and kurtosis. In both scenarios, the BF statistic with $\alpha_0=1$ achieved the highest power as shown in Figure~\ref{fig:roc0}(c)-(d). The ROC curves of BF statistics with $\alpha_0=1$ and $\alpha_0=2$ were similar in Scenarios $1$-$3$, while the BF statistic with $\alpha_0=1$ had a slightly better performance under Scenario $4$. The BF statistic with $\alpha_0=1$ dominated the dynamic slicing (DS) test statistic in all the four scenarios.

\subsection{Conditional dependence testing}

Next, we compare different methods in testing the conditional dependence between a binary covariate $X$ and a continuous response $Y$ given another binary covariate $Z$. Methods under comparison consideration include: the BF statistic (with $\lambda_0=1$ and $\alpha_0=1$), dynamic slicing (``DS'') statistic, and two-way ANOVA test, which tests for main and interaction effects of $X$ conditioning on $Z$.

\begin{figure}[h]
\centering
\includegraphics[width=0.7\textwidth, angle=-90]{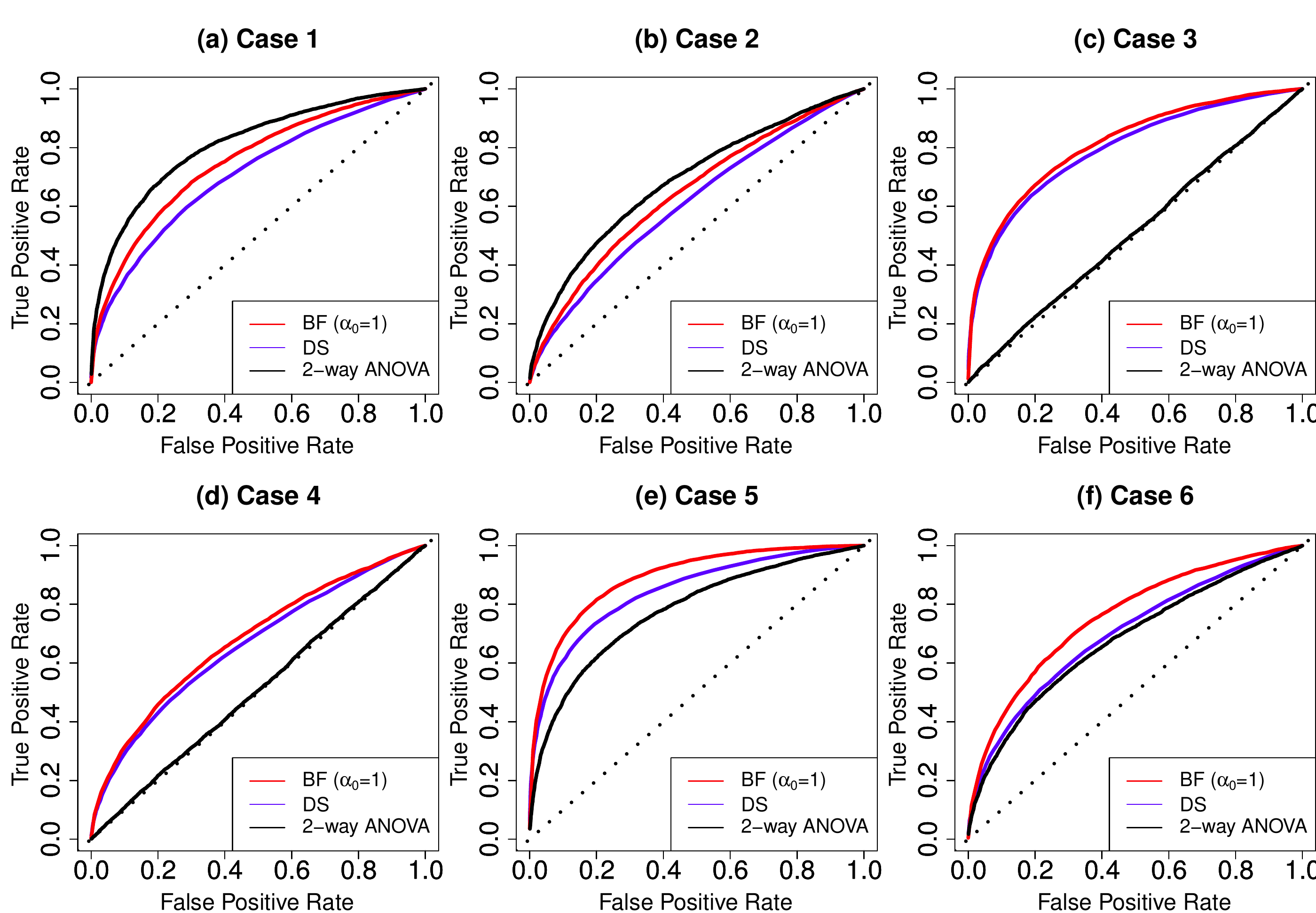}
\caption{The ROC curves that compare true positive rate, the fraction of true positives out of the total actual positives, and false positive rate,  the fraction of false positives out of the total actual negatives, of different methods in Case $1$-$6$ of Section~$3.2$ with uncorrelated covariates $X$ and $Z$.}
\label{fig:roc1}
\end{figure}

\begin{figure}[h]
\centering
\includegraphics[width=0.7\textwidth, angle=-90]{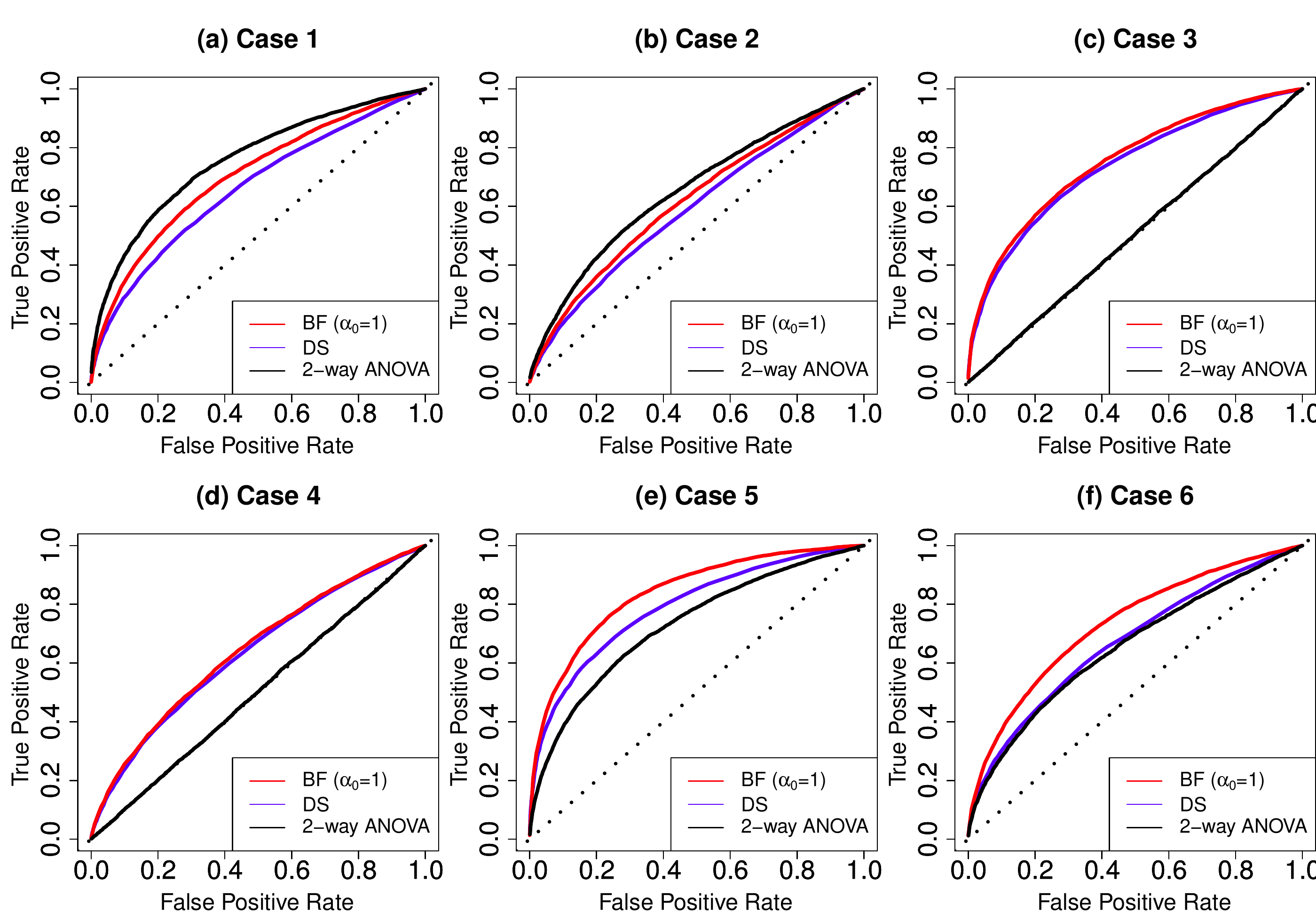}
\caption{The ROC curves that compare true positive rate, the fraction of true positives out of the total actual positives, and false positive rate,  the fraction of false positives out of the total actual negatives, of different methods in Case $1$-$6$ of Section~$3.2$ with correlated covariates $X$ and $Z$.}
\label{fig:roc2}
\end{figure}

In our study, we generate $n=400$ samples with binary covariates $Z \sim \text{Bern}(0.5)$, and conditioning on $Z$, $X|Z=0 \sim \text{Bern}(p_0)$ and $X|Z=1 \sim \text{Bern}(1-p_0)$. We choose $p_0=0.5$ for conditional tests with uncorrelated covariates, and $p_0=0.75$  for conditional tests with correlated covariates. Then, we simulate the response $Y$ according to the following models under the alternative hypothesis:
\begin{itemize}
\item[] Case $1$: $Y = \mu Z + \mu X + \epsilon$; $\epsilon \sim N(0,1)$, $\mu=0.2$.
\item[] Case $2$: $Y = \mu Z X + \epsilon$; $\epsilon \sim N(0,1)$, $\mu=0.2$.
\item[] Case $3$: $Y = \mu Z + \mu X + \epsilon$; $\epsilon \sim \text{Cauchy}(0,1)$, $\mu=0.4$.
\item[] Case $4$: $Y = \mu Z X + \epsilon$; $\epsilon \sim \text{Cauchy}(0,1)$, $\mu=0.4$.
\item[] Case $5$: $Y = \mu Z + \mu X + \epsilon$; $\epsilon \sim N(0,(1+\gamma X)^2)$, $\mu=0.2, \gamma=0.2$.
\item[] Case $6$: $Y = \mu Z X + \epsilon$; $\epsilon \sim N(0,(1+\gamma Z X)^2)$, $\mu=0.2, \gamma=0.2$.
\end{itemize}
Our goal here is to test whether $X$ is independent of $Y$ given $Z$. The ROC curves of different methods under Case $1$-$6$ with uncorrelated ($p_0=0.5$) and correlated ($p_0=0.75$) $X$ and $Z$ are given in Figure~\ref{fig:roc1} and Figure~\ref{fig:roc2}, respectively.

In Cases $1$ and $2$, the two samples were generated from homoscedastic normal distribution with either linear combination or multiplicative interaction of covariates, which  satisfies all the parametric assumptions of the two-way ANOVA test. As we have expected, the two-way ANOVA test achieved highest power  in Figure~\ref{fig:roc1}(a)-(b) and Figure~\ref{fig:roc2}(a)-(b), followed by the BF and DS test statistics. 

However, as shown in Figure~\ref{fig:roc1}(c)-(d) and Figure~\ref{fig:roc2}(c)-(d), when the two samples were generated from Cauchy distribution, the two-way ANOVA test was completely powerless in Case $3$ and $4$, while the BF and DS statistics had considerable powers with Cauchy noises.

Cases $5$ and $6$ illustrate the scenarios when the response has heteroscedastic variances depending on covariates. In both cases, the BF statistic achieved better powers than the DS and two-way ANOVA test. Among all the conditional dependence testing scenarios we have considered, the BF test statistic always outperformed the DS test statistic. The relative performances of different methods were consistent whether covariates $X$ and $Z$ are correlated or not.

\subsection{Interaction detection on synthetic QTL data}

Traditional QTL studies are based on linear regression models \citep{lander1989} in which each (continuous) trait variable is regressed against each (discrete) marker variable. The $p$-value of the regression slope is reported as a measure of significance for association. \cite{storey2005} developed a stepwise regression method to search for pairs of markers that are associated with the gene expression quantitative trait. This procedure, however, tends to miss QTL pairs with small marginal effects but a strong interaction effect. 

In this section, we compare the proposed variable selection method based on the BF statistic with the stepwise regression (SR) method in identifying genetic markers with interaction effects in synthetic QTL data sets. Using the R package \emph{qtl}, we generated $100$ binary markers with sample size $n=400$ such that adjacent markers are correlated with each other, and then, we randomly select two markers and simulate quantitative traits according to Cases $1$-$6$ in the previous section. We evaluate the performance of the BF and SR methods using the following procedure. First, in the screening step, we calculate the unconditional test statistic for each marker and obtain a list of candidate markers with test statistic above a given threshold $T_1$. Second, conditioning on the most significant candidate marker, we select other candidate markers with conditional test statistics above another threshold $T_2$. Finally, we vary the thresholds $T_1$ and $T_2$ simultaneously to generate the ROC curves in Figure~\ref{fig:roc3}.

\begin{figure}[h]
\centering
\includegraphics[width=0.7\textwidth, angle=-90]{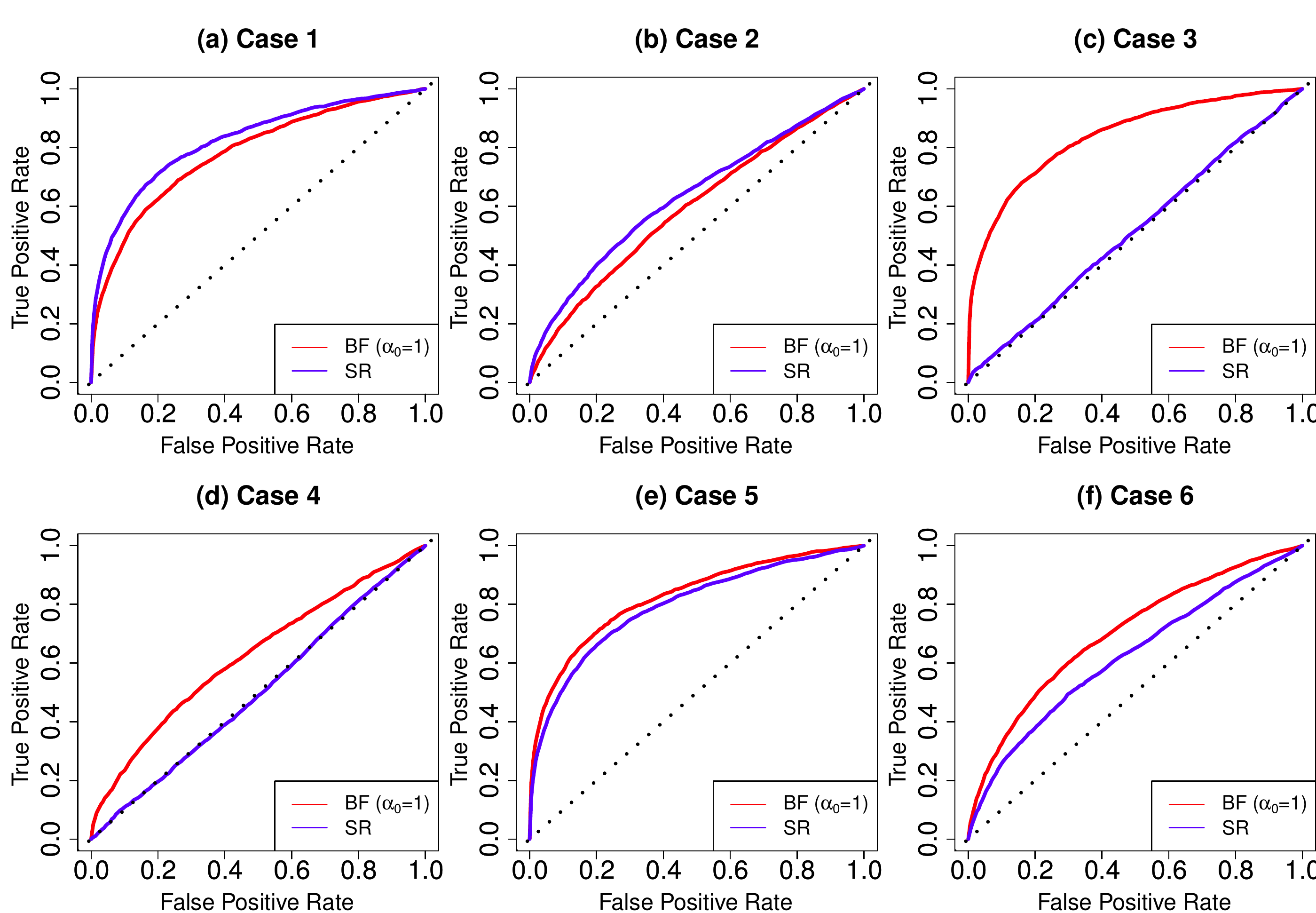}
\caption{The ROC curves that compare true positive rate, the fraction of the true gene-marker pairs detected, and false positive rate, the fraction of unrelated gene-marker pairs falsely selected, of stepwise regression (SR) and BF statistic on synthetic QTL data sets in Section~$3.3$.}
\label{fig:roc3}
\end{figure}

As we can see from Figure~\ref{fig:roc3}, the SR  had a better power when its underlying assumptions, \emph{i.e.}, linearity, normality, and homoscedasticity, were satisfied as in Cases $1$ and $2$. However, with the presence of extreme values or heteroscedastic effects in Cases $3$-$6$, the BF statistic was much more powerful than the SR.

\section{Application to QTL study in mouse}

\cite{burke2012} measured a mouse population for complex adult phenotypes, related to body size and bone structure, and conducted a genome-wide search for QTLs that are marginally associated with quantitative phenotypes. Each mouse in the population was genotyped at $558$ biallelic loci, $i.e.$, binary genetic markers. For each trait, \cite{burke2012} performed a single-locus genome-wide search using one-way ANOVA model and permutation-based test of significance. 

We applied the screening and the BF-based  stepwise selection procedure proposed in Section $2.6$ to search for effective loci associated with two quantitative traits, femur length and vertebra length. At the screening step and each forward selection step, we permuted sample labels conditioning on the observed values of previously selected QTLs, and in each permuted data set, we recorded the maximum value of BF statistics among all the candidate QTLs. Then, a \emph{genome-wide} $p$-value is evaluated by comparing the observed BF statistic with these maximum BF values from $1000$ permuted data sets. In the screening step, we retained $35$ and $49$ loci, respectively for femur length and vertebra length, with unconditional BF value larger than $10$ (corresponding to genome-wide $p$-value of $0.03$ and $0.05$) as candidate QTLs for forward stepwise selection. QTLs identified through stepwise selection on each trait, together with their BF values and genome-wide \textit{p}-values in forward selection steps, as well as relationships with significant loci found in 
\cite{burke2012}, are given in Table~\ref{tab:femur} and~\ref{tab:verte}.

Table~\ref{tab:femur} shows femur length QTLs that are selected from the first $5$ iterations of the proposed stepwise method, together with their BF values and genome-wide $p$-values at each forward selection step. 
\cite{burke2012} reported the same $5$ genomic regions as marginally associated with the trait and their genome-wide $p$-values from the paper are given in Table~\ref{tab:femur}. Using a cutoff of $0.05$ for $p$-values, the BF-based stepwise procedure was terminated after the third iteration, \emph{i.e.}, we could not reject the null hypothesis that \emph{D5Mit25} and mouse femur length are conditionally independent given the previously selected loci rs3091203, \emph{D2Mit285} and rs3657845. This is confirmed on an independent replicate population with femur length measurement and genotypes on a subset of loci ($356$ of $558$ loci) provided by \cite{burke2012}. Although \emph{D5Mit25} (located at CH5$\cdot$114) was not genotyped in the replicate population, genotypes of its neighboring locus rs13478469 (also located at CH5$\cdot$114) were available. Note that in the original population, both \emph{D5Mit25} and rs13478469 have genome-wide $p$-values smaller than $0.001$ according to \cite{burke2012}, but rs13478469 was not reported in \cite{burke2012} due to its adjacency and high correlation (about $0.95$) with \emph{D5Mit25}. In the replicate population, $3$-way ANOVA test shows that the top $3$ QTLs in Table~\ref{tab:femur} all have significant main effects with $p$-values $<0.001$. On the other hand, rs13478469 does not have significant main effect or interaction effects with other $3$ QTLs ($p$-values $>0.1$) according to $4$-way ANOVA test on the replicate population. These results from an independent replicate population are consistent with the conclusions of our testing procedure applied to the original population.


\begin{table}[h]
\caption{\label{tab:femur} Identified QTLs (ranked according to their orders in forward stepwise selection) associated with mouse femur length, their Bayes factor values and corresponding genome-wide $p$-values, and relationships with significant loci reported in \cite{burke2012}. Genomic location is in \textit{Chromosome$\cdot$Mb} format.}
\vspace{5mm}
\centering
\fbox{
\begin{tabular}{ c c c c c c}
   QTL & Location & Bayes Factor & $p$-value & Reported in \cite{burke2012} \\ 
    \hline
    rs3091203 & CH13$\cdot$22 & $2.7\times10^7$ & $<0.001$ & $p$-value $<0.001$ \\
    \emph{D2Mit285} & CH2$\cdot$152  & $1.8\times10^5$ & $<0.001$ & CH2$\cdot$157 (rs4223627), $p$-value $<0.001$ \\
    rs3657845 & CH17$\cdot$17 & $63.7$ & $0.004$ & $p$-value $=0.011$ \\
    \emph{D5Mit25} & CH5$\cdot$114 & $2.3$ & $0.192$ & $p$-value $<0.001$ \\
    \emph{D9Mit110} & CH9$\cdot$91 & $46.9$ & $0.062$ & $p$-value $<0.001$ \\
  \end{tabular}
}
\end{table}

\begin{table}[h]
\caption{\label{tab:verte} Identified QTLs (ranked according to their orders in forward stepwise selection) associated with mouse vertebra length, their Bayes factor values and corresponding genome-wide $p$-values, and relationships with significant loci reported in \cite{burke2012}. Genomic location is in \emph{Chromosome$\cdot$Mb} format.}
\centering
\fbox{
\begin{tabular}{ c c c c c c}
   QTL & Location & Bayes Factor & $p$-value & Reported in \cite{burke2012} \\ 
    \hline
    rs4222738 & CH1$\cdot$158 & $7.0\times10^9$ & $<0.001$ & $p$-value $<0.001$ \\
    \emph{D1Mit105} & CH1$\cdot$162 & $4.6\times10^7$ & $<0.001$ & CH1$\cdot$166 (rs4222769), $p$-value $<0.001$ \\
    \emph{D2Mit58} & CH2$\cdot$108 & $1.6\times10^3$ & $0.001$ & CH2$\cdot$111 (rs3023543), $p$-value $=0.006$ \\
    \emph{D16Mit36} & CH16$\cdot$31 & $991.8$ & $0.001$ & {\bf Not reported} \\
    \emph{D7Mit76} & CH7$\cdot$18 & $247.9$ & $0.019$ & $p$-value $0.0026$ \\
    rs13481706 & CH13$\cdot$16 & $2.8\times10^5$ & $0.039$ & $p$-value $0.019$ \\
  \end{tabular}
}
\end{table}

In Table~\ref{tab:verte}, the proposed forward stepwise procedure based on the BF statistic detected $6$ QTLs associated with vertebra length under a significance level of $0.05$, which include all of the $5$ genomic regions (either the locus itself or the neighboring locus located next to it) reported in \cite{burke2012}. Besides, our analysis identified an additional locus, \emph{D16Mit36}. From the first plot in Figure~\ref{fig:box}, we can see that although the distributions of vertebra length given two alleles of \emph{D16Mit36} have similar means (one-way ANOVA test of equal means has $p$-value $=0.19$), the variances are quite different (an F-test of equal variances has $p$-value $=7.16\times10^{-5}$). Because of its heteroscedastic effect, the screening based on the BF statistic was able to retain \emph{D16Mit36} as a candidate QTL, while one-way ANOVA test missed the locus completely. Further analysis shows that QTLs \emph{D16Mit36} and \emph{D2Mit58}, which was identified in the previous step, have positive \emph{epistasis} effect. From Figure~\ref{fig:box}, we can see that \emph{D16Mit36} and \emph{D2Mit58}  have non-additive effects, that is, the effect of \emph{D2Mit58} is larger when \emph{D16Mit36} has allele B6. Two-way ANOVA test of interaction effect between \emph{D16Mit36} and \emph{D2Mit58} has a $p$-value of $=0.001$. This example demonstrates that the proposed selection procedure based on the BF statistic is particularly effective in detecting QTLs with interaction and heteroscedastic effects.

\begin{figure}[ht]
\centering
\includegraphics[width=0.6\textwidth, angle=-90]{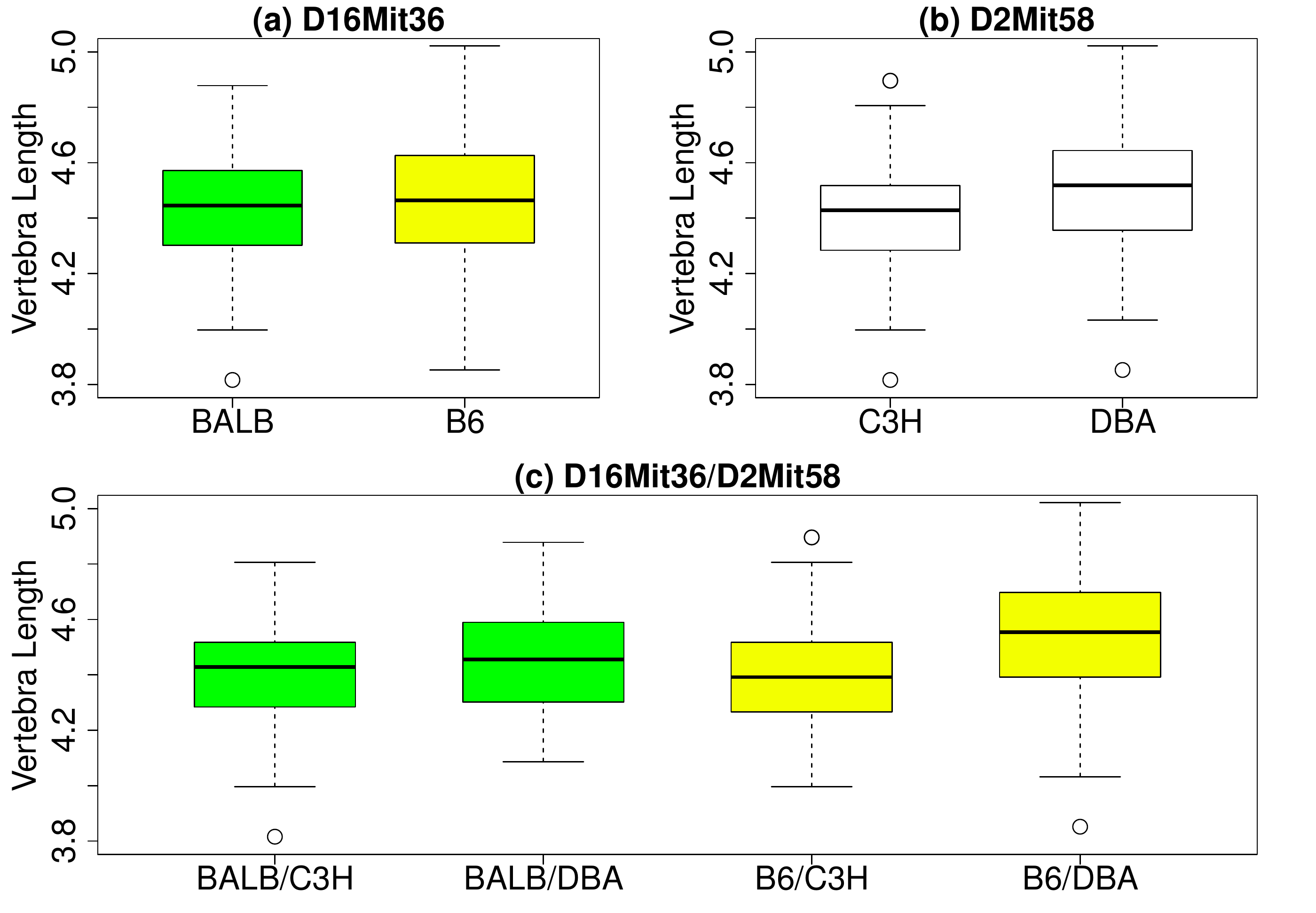}
\caption{Box-plots showing the heteroscedastic effect on vertebra length given two alleles (BALB and B6) of \emph{D16Mit36} (a), the mean shift effect on vertebra length given two alleles (C3H and DBA) of \emph{D2Mit58} (b), and the epistasis effect between \emph{D16Mit36} and \emph{D2Mit58} (c).}
\label{fig:box}
\end{figure}

\section{Discussion}

We have developed a non-parametric dependence testing method for categorical covariates and continuous response, and implemented the proposed method in R package \emph{bfslice}, which can be downloaded from \url{http://www.people.fas.harvard.edu/~junliu/BF/} or requested from the authors directly. As a dependence testing tool, the proposed Bayes factor-based statistic achieves a higher statistical power compared with traditional non-parametric methods such as the Kolmogorov-Smirnov test, and is more robust to outliers and various distributional assumptions compared with classical ANOVA based approaches. Furthermore, the stepwise variable selection method based on the BF statistic is particularly effective in detecting covariates with interaction or heteroscedastic effects, especially when the combined number of covariate categories is relatively large compared with sample size. Theoretically, we proved upper bounds on p-values (type-I errors) of the BF statistic under a variety of null hypothesis assumptions, and showed that the proposed BF statistic asymptotically grows to infinity at an exponential rate under the alternative hypothesis and with proper choices of hyper-parameters.  We also  fitted a fairly accurate empirical formula for the type-I error of any given BF cutoff value. But a theoretical derivation of its exact form remains an open question. 

The method described in this paper can be easily used to deal with categorical or discrete ordinal response variables. For categorical response, different response categories naturally define the ``slicing'' scheme, and for discrete ordinal response (or continuous response with ties), we can arbitrarily rank observations with the same value of response and only allow slicing between ranked observations that have different observed response values. A potential research direction is to extend the Bayes factor approach for variable selection with continuous covariates under the sliced inverse regression framework \citep{jiang2014a} 

For applications with multivariate response, we can further generalize the concept of ``slices'' to unobserved clusters (aka ``partitions'' ) of samples, and model the distribution of response and covariates independently given hidden cluster labels. Combined with a Markov Chain Monte Carlo strategy, we are currently developing a Bayesian partition procedure for detecting expression quantitative trait loci (eQTLs) with variable selection on both responses (gene expression) and covariates (genetic variations). 

\appendix

\section{Proofs}\label{app:proofs}

\subsection{Proof of Theorem~\ref{thm:bf1}}\label{app:thm1}

\begin{proof}[Theorem~\ref{thm:bf1}]
Without loss of generality, we assume that the observations have been ranked according to their values of $Y$, that is $y_i = y_{(i)}$. For any $t \in \{1,\ldots,n\}$, let $n_{j,k}^{(1:t)} = |\{i : X_i=k, Z_i=j, i=1,\ldots,t\}|$ and $n_{j}^{(1:t)}=\sum_{k=1}^{|X|}n_{j,k}^{(1:t)}$. We use $\text{Pr}_{\text{shuffle}}\left( \left. \{x_i\}_{i=1}^t \right| \{z_i\}_{i=1}^t \right)$ to denote the probability of observing $\{X_i=x_i\}_{i=1}^t$ given $\{z_i\}_{i=1}^t$ under a conditional permutation scheme, which samples the value of $X_i$ ($i=1,\ldots,t$) without replacement from $\{x_s: z_s = z_i, s = 1,\ldots,n\}$. Conditioning on $\{n_{j,k}:j=1,\ldots,|Z|,k=1,\ldots,|X|\}$, we have
\begin{eqnarray}\nonumber
& & \text{Pr}_{\text{shuffle}}\left( \left. \{x_i\}_{i=1}^t \right| \{z_i\}_{i=1}^t \right) \\\nonumber
& = & \prod_{j=1}^{|Z|} \frac{\prod_{k=1}^{|X|}\mathbb{I}\left(n_{j,k}^{(1:t)} \leq n_{j,k}\right)}{\dbinom{n_j^{(1:t)}}{{n_{j,1}^{(1:t)},\ldots,n_{j,|X|}^{(1:t)}}}} \frac{\dbinom{n_j^{(1:t)}}{n_{j,1}^{(1:t)},\ldots,n_{j,|X|}^{(1:t)}} \dbinom{n_j-n_j^{(1:t)}}{n_{j,1}-n_{j,1}^{(1:t)},\ldots,n_{j,|X|}-n_{j,|X|}^{(1:t)}}}{\dbinom{n_j}{n_{j,1},\ldots,n_{j,|X|}}}  \\\nonumber
& \leq & \prod_{j=1}^{|Z|} \frac{1}{\dbinom{n_j^{(1:t)}}{n_{j,1}^{(1:t)},\ldots,n_{j,|X|}^{(1:t)}}} = \prod_{j=1}^{|Z|} \left[\frac{\prod_{k=1}^{|X|}\Gamma\left(n_{j,k}^{(1:t)}+1\right)}{\Gamma\left(n_j^{(1:t)}+1\right)}\right],
\end{eqnarray}
where $\mathbb{I}\left(A\right) = 1$ if event $A$ is true and $0$ otherwise, and the inequality follows from the fact that both the indicator functions and the generalized hypergeometric probability are no greater than one.

Given $\alpha_0 \leq |X|$, the Dirichlet prior density of $p_j$, 
$$
f(p_j) = \Gamma\left(\alpha_0\right) \prod_{k=1}^{|X|} \left[p_{j,k}^{\frac{\alpha_0}{|X|}-1}\left/\Gamma\left(\frac{\alpha_0}{|X|}\right.\right)\right] \geq m_0
$$ 
for some $m_0>0$ and $j = 1, \ldots, |Z|$. Let $\psi_{s,t} \equiv \Psi\left(x_s,\ldots,x_t\right)$ ($1 \leq s \leq t \leq n$) denote the probability of observing $\{X_i=x_i\}_{i=s}^t$ given $\{z_i\}_{i=s}^t$ under the Dirichlet priors on $p_j$'s in (\ref{eq:dir}). Then,
\begin{eqnarray}\nonumber
\Psi\left(x_1,\ldots,x_t\right) &=& \int\ldots\int \left[\prod_{j=1}^{|Z|} \Gamma\left(\alpha_0\right) \prod_{k=1}^{|X|}  \frac{p_{j,k}^{n_{j,k}^{(1:t)}+\frac{\alpha_0}{|X|}-1}}{\Gamma\left(\frac{\alpha_0}{|X|}\right)} \right] \prod_{j=1}^{|Z|} dp_j \\\nonumber
&\geq& m_0^{|Z|}\prod_{j=1}^{|Z|} \left[\frac{\prod_{k=1}^{|X|}\Gamma\left(n_{j,k}^{(1:t)}+1\right)}{\Gamma\left(n_j^{(1:t)}+|X|\right)}\right] \\\nonumber
 & \geq & (1/C_0) n^{-|Z|(|X|-1)}\prod_{j=1}^{|Z|} \left[\frac{\prod_{k=1}^{|X|}\Gamma\left(n_{j,k}^{(1:t)}+1\right)}{\Gamma\left(n_j^{(1:t)}+1\right)}\right],
\end{eqnarray}
where we assume $n \geq |X|$ and $C_0 = (2^{|X|}/m_0)^{|Z|} > 0$. Thus,
\begin{equation}\label{eq:cover}
\text{Pr}_{\text{shuffle}}\left( \left. \{x_i\}_{i=1}^t \right| \{z_i\}_{i=1}^t \right) \leq C_0 n^{|Z|(|X|-1)} \Psi\left(x_1,\ldots,x_t\right).
\end{equation}

Given a slicing scheme $S(\cdot)$ with $|S| \geq 2$, we use $n_h$ to denote the number of observations with $S(y_i) = h$. Assume that both observations and slice indices have been sorted, that is, $y_i = y_{(i)}$ and $S(y_i) \leq S(y_j)$ for $i < j$. Then,
$$
\text{BF}\left(X|S(Y),Z\right) =  \frac{\text{Pr}_{H_1}\left(X|S(Y),Z\right)}{\text{Pr}_{H_0}\left(X|Z,Y\right)} = \prod_{h=2}^{|S|} \Delta\left(\sum_{l=1}^{h-1} n_{l}+1, n_h\right),
$$
where
\begin{equation}\label{eq:delta}
\Delta(s,t) = \frac{\psi_{1,s-1}\psi_{s,t}}{\psi_{1,t}} = \frac{\Psi\left(x_1,\ldots,x_{s-1}\right)\Psi\left(x_s,\ldots,x_t\right)}{\Psi\left(x_1,\ldots,x_t\right)},
\end{equation}
 for $2 \leq s \leq t \leq n$. So
\begin{eqnarray}\nonumber
& &\text{BF}\left(X|S(Y),Z\right) \pi_0^{|S|-1}(1-\pi_0)^{n-|S|} \\\nonumber
& = & \frac{(1-\pi_0)^{n-1}}{\left(1-1/n\right)^{n-1}} (1/n)^{|S|-1}(1-1/n)^{n-|S|} \text{BF}\left(X|S(Y),Z\right) \left[(n-1)\frac{\pi_0}{1-\pi_0}\right]^{|S|-1} \\\nonumber
& \leq & e (1/n)^{|S|-1}(1-1/n)^{n-|S|} \prod_{h=2}^{|S|} \left[ (n-1)\frac{\pi_0}{1-\pi_0}
\Delta\left(\sum_{l=1}^{h-1} n_{l}+1, n_h\right)\right] \\\nonumber
& \leq &  (1/n)^{|S|-1}(1-1/n)^{n-|S|} \prod_{h=2}^{|S|} \left[ ne \frac{\pi_0}{1-\pi_0}
\Delta\left(\sum_{l=1}^{h-1} n_{l}+1, n_h\right) \right]
\end{eqnarray}

According to (\ref{eq:bf}),
$$
\text{BF}\left(X|Y,Z\right) = \sum_{S(Y) \in \Omega_n(S)} \text{BF}\left(X|S(Y),Z\right) \pi_0^{|S|-1}(1-\pi_0)^{n-|S|}.
$$
For any $b \geq 1$, if $\max_{2 \leq s \leq t \leq n} \left[ ne\frac{\pi_0}{1-\pi_0} \Delta\left(s,t\right) \right] \leq \log(b)+1$, then 
$$
\text{BF}\left(X|Y,Z\right) \leq \sum_{S(Y) \in \Omega_n(S)}\left(\frac{\log(b)+1}{n}\right)^{|S|-1}\left(1-\frac{1}{n}\right)^{n-|S|} = \left(1+\frac{\log(b)}{n}\right)^{n-1} \leq b,
$$ 
Thus,
\begin{equation}\label{eq:bound}
\text{Pr}_{\text{shuffle}}\left(\text{BF}\left(X|Y,Z\right) > b\right) \leq \text{Pr}_{\text{shuffle}}\left( \max_{2 \leq s \leq t \leq n} \left[ ne\frac{\pi_0}{1-\pi_0} \Delta\left(s,t\right) \right] > \log(b)+1\right).
\end{equation}

When $H_0$ is true, for any $\delta>0$, we have
\begin{eqnarray}\nonumber
& & \text{Pr}_{\text{shuffle}}\left( \Delta\left(s,t\right) > \delta \right) \\\nonumber
& \leq & \sum_{\left(x_1,\ldots,x_t\right)} \delta^{-1} \Delta(s,t) \text{Pr}_{\text{shuffle}}\left( \left. \{x_i\}_{i=1}^t \right| 
\{z_i\}_{i=1}^t \right) \\\nonumber
& \leq & \sum_{\left(x_1,\ldots,x_t\right)} \delta^{-1} \Delta(s,t) C_0 n^{|Z|(|X|-1)} \Psi\left(x_1,\ldots,x_t\right) \\\nonumber
& = & \delta^{-1} C_0 n^{|Z|(|X|-1)} \sum_{\left(x_1,\ldots,x_{s-1}\right)} \Psi\left(x_1,\ldots,x_{s-1}\right) \sum_{\left(x_s,\ldots,x_t\right)} \Psi\left(x_s,\ldots,x_t\right) \\\nonumber
& = & \delta^{-1} C_0n^{|Z|(|X|-1)},
\end{eqnarray}
where $2 \leq s \leq t \leq n$ and $\sum_{\left(x_s,\ldots,x_t\right)}$ denotes the sum over all possible $|X|^{t-s+1}$ configurations of $\left(X_s,\ldots,X_t\right)$. The second inequality follows from (\ref{eq:cover}), the second to the last equality follows from (\ref{eq:delta}) and the last equality follows from the fact that 
$$
\sum_{\left(x_1,\ldots,x_{s-1}\right)} \Psi\left(x_1,\ldots,x_{s-1}\right) = \sum_{\left(x_s,\ldots,x_t\right)} \Psi\left(x_s,\ldots,x_t\right)=1
$$ for $1 \leq s \leq t \leq n$. By letting $\delta = \frac{\log(b)+1}{n\pi_0e/(1-\pi_0)}$ and  $\log\left(\pi_0/(1-\pi_0)\right) = -\lambda_0 \log(n)$, we have
\begin{eqnarray}\nonumber
\text{Pr}_{\text{shuffle}}\left(ne\frac{\pi_0}{1-\pi_0} \Delta\left(s,t\right) > \log(b)+1\right) &\leq& C_0e\frac{\pi_0}{1-\pi_0}\frac{n^{|Z|(|X|-1)+1}}{\log(b)+1} \\\nonumber
&=& C_0e\frac{n^{|Z|(|X|-1)+1}}{(\log(b)+1)n^{\lambda_0}},
\end{eqnarray}
and according to (\ref{eq:bound}),
\begin{eqnarray}\nonumber
\text{Pr}_{\text{shuffle}}\left(\text{BF}\left(X|Y,Z\right) > b\right) &\leq& \text{Pr}_{\text{shuffle}}\left( \max_{2 \leq s \leq t \leq n} \left[ ne\frac{\pi_0}{1-\pi_0} \Delta\left(s,t\right) \right] > \log(b)+1\right) \\\nonumber
&\leq& \sum_{2 \leq s \leq t \leq n} \text{Pr}_{\text{shuffle}}\left(ne\frac{\pi_0}{1-\pi_0} \Delta\left(s,t\right) > \log(b)+1\right) \\\nonumber
&\leq& C_1\frac{n^{|Z|(|X|-1)+3}}{(\log(b)+1)n^{\lambda_0}},
\end{eqnarray}
where $C_1 = C_0e/2$. 

Moreover, according to (\ref{eq:cover}) and $\text{Pr}_{H_0}\left(X|Z,Y\right) = \Psi\left(x_1,\ldots,x_n\right)$,
$$
\text{Pr}_{\text{shuffle}}\left( \left. \{x_i\}_{i=1}^n \right| \{z_i\}_{i=1}^n \right) \leq C_0n^{|Z|(|X|-1)}\text{Pr}_{H_0}\left(X|Z,Y\right),
$$
and
\begin{eqnarray}\nonumber
& & \text{Pr}_{\text{shuffle}}\left(\text{BF}\left(X|Y,Z\right) > b\right) \\\nonumber
& \leq & \sum_{\left(x_1,\ldots,x_n\right)} \frac{\text{BF}\left(X|Y,Z\right)}{b} \text{Pr}_{\text{shuffle}}\left( \left. \{x_i\}_{i=1}^n \right|
\{z_i\}_{i=1}^n \right) \\\nonumber
& \leq & \frac{C_0 n^{|Z|(|X|-1)}}{b} \sum_{\left(x_1,\ldots,x_n\right)}  \text{Pr}_{H_1}\left(X|Z,Y\right)\\\nonumber
&=& \frac{C_0n^{|Z|(|X|-1)}}{b} \leq \frac{C_1n^{|Z|(|X|-1)}}{b},
\end{eqnarray}
where $C_1 = C_0e/2 > C_0$, $\text{Pr}_{H_1}\left(X|Z,Y\right) $ is given by (\ref{eq:sum}) and $\sum_{\left(x_1,\ldots,x_n\right)}$ denotes the sum over all possible $|X|^{n}$ configurations of $\left(X_1,\ldots,X_n\right)$. The second inequality follows from (\ref{eq:bf}), and the last equality follows from the fact that  total probability is one. 
\end{proof}

\subsection{Proof of Corollary~\ref{cor:bf1}}\label{app:cor1}

\begin{proof}[Corollary~\ref{cor:bf1}]
We have the following inequalities on log-gamma functions:
$$
\log \Gamma(x) \leq (x-0.5)\log(x)-x + 1,
$$
and
$$
\log \Gamma(x) \geq (x-\gamma_0)\log(x)-x+1,
$$
where $x \geq 1$ and $\gamma_0=0.57722\ldots$ is the Euler-Mascheroni constant \citep{li2007}. Moreover,
$$
(x+c)\log(x+c) - x\log(x) \leq c\log(x) + c(1+c),
$$
for $x \geq 1$ and $c>0$. Therefore,
\begin{eqnarray}\label{eq:gamma}
& & \sum_{k=1}^{|X|}\log\Gamma\left(n_{j,k}+1\right)-\log\Gamma\left(n_{j}+|X|\right)\\\nonumber
&\geq& \sum_{k=1}^{|X|}n_{j,k}\log\left(n_{j,k}/n_j\right)-(|X|-1.5+\gamma_0)\log(n_j)-|X|(|X|+1),
\end{eqnarray}
for $j = 1, \ldots, |Z|$. Given $\alpha_0 \leq |X|$, the Dirichlet prior density of $p_j$, $f(p_j) \geq m_0$ for some $m_0>0$ and $j = 1, \ldots, |Z|$. When the sharp null hypothesis $H_0^{\text{sharp}}$ is true,
$$
\text{Pr}_{\text{sharp}}\left( \left. \{x_i\}_{i=1}^t \right| \{z_i\}_{i=1}^t \right) = \prod_{j=1}^{|Z|} \prod_{k=1}^{|X|} p_{j,k}^{n_{j,k}^{(1:t)}} \leq \prod_{j=1}^{|Z|} \prod_{k=1}^{|X|} \left(\frac{n_{j,k}^{(1:t)}}{n_j^{(1:t)}}\right)^{n_{j,k}^{(1:t)}},
$$
and
\begin{eqnarray}\nonumber
\Psi\left(x_1,\ldots,x_t\right) &\geq& m_0^{|Z|}\prod_{j=1}^{|Z|} \left[\frac{\prod_{k=1}^{|X|}\Gamma\left(n_{j,k}^{(1:t)}+1\right)}{\Gamma\left(n_j^{(1:t)}+|X|\right)}\right]\\\nonumber
&\geq& (1/B_0) n^{-|Z|(|X|-1.5+\gamma_0)}\prod_{j=1}^{|Z|} \prod_{k=1}^{|X|} \left(\frac{n_{j,k}^{(1:t)}}{n_j^{(1:t)}}\right)^{n_{j,k}^{(1:t)}},
\end{eqnarray}
where the last inequality follows from (\ref{eq:gamma}) with $B_0 = (e^{|X|(1+|X|)}/m_0)^{|Z|} > 0$. Therefore,
$$
\text{Pr}_{\text{sharp}}\left( \left. \{x_i\}_{i=1}^t \right| \{z_i\}_{i=1}^t \right) \leq B_0 n^{|Z|(|X|-1.5+\gamma_0)} \Psi\left(x_1,\ldots,x_t\right),
$$
and the rest of the proof follows the same argument as in the proof of Theorem~\ref{thm:bf1}.
\end{proof}

\subsection{Proof of Corollary~\ref{cor:bf2}}\label{app:cor2}

\begin{proof}[Corollary~\ref{cor:bf2}]
 Given $\alpha_0 \leq |X|$, the Dirichlet prior density of $p_j$, $f(p_j) \geq m_0$ for some $m_0>0$ and $j = 1, \ldots, |Z|$. When the hierarchical null hypothesis $H_0^{\text{hierar}}$ is true,
$$
\prod_{j=1}^{|Z|}f_j\left(p_j\right) \leq A_0 \prod_{j=1}^{|Z|}\Gamma\left(\alpha_0\right) \prod_{k=1}^{|X|}  \frac{p_{j,k}^{\frac{\alpha_0}{|X|}-1}}{\Gamma\left(\frac{\alpha_0}{|X|}\right)},
$$
where $f_j\left(p_j\right)\leq M_0$ for some $M_0 > 0$, $j=1,\ldots,|Z|$, and $A_0 = (M_0/m_0)^{|Z|}$. Thus,
\begin{eqnarray}\nonumber
& & \text{Pr}_{\text{hierar}}\left( \left. \{x_i\}_{i=1}^t \right| \{z_i\}_{i=1}^t \right) =\int\ldots\int \left[\prod_{j=1}^{|Z|} \prod_{k=1}^{|X|} p_{j,k}^{n_{j,k}^{(1:t)}}\right] \prod_{j=1}^{|Z|} f_j(p_j)dp_j \\\nonumber
& \leq & A_0 \int\ldots\int \left[\prod_{j=1}^{|Z|} \Gamma\left(\alpha_0\right) \prod_{k=1}^{|X|}  \frac{p_{j,k}^{n_{j,k}^{(1:t)}+\frac{\alpha_0}{|X|}-1}}{\Gamma\left(\frac{\alpha_0}{|X|}\right)} \right] \prod_{j=1}^{|Z|} dp_j \\\nonumber
&=& A_0\prod_{j=1}^{|Z|} \left[\frac{\Gamma(\alpha_0)}{\Gamma\left(\alpha_0 + n_j^{(1:t)}\right)}\prod_{k=1}^{|X|}\frac{\Gamma\left(n_{j,k}^{(1:t)}+\frac{\alpha_0}{|X|}\right)}{\Gamma\left(\frac{\alpha_0}{|X|}\right)}\right] 
= A_0\Psi\left(x_1,\ldots,x_t\right).
\end{eqnarray}
The rest of the proof follows the same argument as in the proof of Theorem~\ref{thm:bf1}.
\end{proof}

\subsection{Proof of Theorem~\ref{thm:bf2}}\label{app:thm2}

The alternative hypothesis $H_1$ is true if and only if the conditional mutual information between $X$ and $Y$ given $Z$, $\text{MI}(X,Y|Z)>0$. For any slicing scheme $S(\cdot)$, we define the plug-in estimator of the mutual information between $X$ and $S(Y)$ conditional on $Z$, $\text{MI}\left(X,S(Y)|Z\right)$, as
$$
\widehat{\text{MI}}\left(X,S(Y)|Z\right) = \frac{1}{n}\left[\sum_{j=1}^{|Z|}\sum_{h=1}^{|S|}\sum_{k=1}^{|X|}n_{j,k}^{(h)}\log\left(\frac{n_{j,k}^{(h)}}{n_j^{(h)}}\right) - \sum_{j=1}^{|Z|}\sum_{k=1}^{|X|}n_{j,k}\log\left(\frac{n_{j,k}}{n_j}\right)\right].
$$
The conditional mutual information $\text{MI}(X,Y|Z)$ is invariant to invertible transformations of $Y$ given $Z$, that is, $\text{MI}(X,Y|Z)=\text{MI}(X,U|Z)$ for $U = h_j (Y)$, where the function $h_j$ can be different for different $j$'s ($1 \leq j \leq |Z|$). Without loss of generality, we assume that $U|Z=j \sim \text{Unif}(0,1)$ for $1 \leq j \leq |Z|$  (since we can always apply the transformation $F(y|Z=j) = \text{Pr}\left(Y \leq y | Z=j\right)$ within each group $Z=j$ while keeping $\text{MI}(X,Y|Z)$ invariant). Let $f(u|Z=j)$ be the probability density function of $U$ given $Z=j$ and $f(u|X=k,Z=j)$ be the conditional probability density function of $U$ given $X=k$ and $Z=j$.  Since for $u \in [0,1]$,
$$
f(u|Z=j) = \sum_{k=1}^K f(u|X=k,Z=j) \text{Pr}\left(X=k|Z=j\right) = 1,
$$
there exists $B_0 > 0$ such that
$$
\max_{u \in [0,1], k \in \{1,\ldots,|X|\}, j \in \{1,\ldots,|Z|\}} \left[f(u|X=k,Z=j)\right]  \leq B_0.
$$ 
The following regularity condition guarantees that there exists a slicing scheme $\widetilde{S}_n(\cdot)$ such that $|\widetilde{S}_n|=O(|Z|n^{\gamma})$ ($\gamma<1$) and $\text{MI}(X,\widetilde{S}_n(Y)|Z) \rightarrow \text{MI}(X,Y|Z)$ as $n \rightarrow \infty$. It requires that the logarithm curve of the probability density does not have too many big ``jumps''. The condition is purely technical and only serve to provide theoretical understanding of the proposed method. We do not intend to make these assumptions the weakest possible.

\vspace{.1in}

\noindent \textbf{Regularity Condition:} Assume that the derivative $f'(u|X=k,Z=j)$ exists almost every where for $u \in [0,1]$ and $1 \leq k \leq |X|$, $1 \leq j \leq |Z|$. Let $\Lambda_{j,k} (\eta) = \{u \in [0,1]: |f'(u|X=k,Z=j)| \geq \eta f(u|X=k,Z=j) \}$ and $\Omega_{j,k} (\eta)$ denote collections of disjoint intervals in the set $\Lambda_{j,k} (\eta)$. We assume that there exist $\eta_0>0$ and $N_0 \geq 0$ such that for any $\eta \geq \eta_0$, $\sum_{j=1}^{|Z|}\sum_{k=1}^{|X|}|\Omega_{j,k}(\eta)| \leq N_0$, where $|\Omega_{j,k}(\eta)|$ is the number of disjoint intervals in $\Omega_{j,k}(\eta)$.

\vspace{.1in}

To prove Theorem~\ref{thm:bf2}, we will need the following lemma, the proof of which is given in the online supplement \citep{jiang2015}. 
\begin{lemma}\label{lem:bf}
Under the above regularity condition, there exists a slicing scheme $\widetilde{S}_n(\cdot)$ with  $|\widetilde{S}_n| = O(|Z|n^{2/3})$ such that for sufficiently large $n$,
$$
\mathrm{Pr}\left(\widehat{\mathrm{MI}}\left(X,\widetilde{S}_n(Y)|Z\right) \geq \mathrm{MI}\left(X,Y|Z\right)-\delta_0(n)\right) \geq  1-4n^{-\frac{1}{32}\log(n)}.
$$
where $\delta_0(n) = O\left(\frac{\log(n)}{n^{1/3}}\right) \rightarrow 0$ as $n \rightarrow \infty$.
\end{lemma}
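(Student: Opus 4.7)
The plan is to construct an explicit equal-measure slicing $\widetilde{S}_n$ and to control the discretization bias and the plug-in concentration error separately, both at rate $O(\log(n)/n^{1/3})$. Since $\mathrm{MI}(X, Y \mid Z)$ is invariant under the group-wise probability integral transform $U = F(Y \mid Z)$, I work in the representation where $U \mid Z=j \sim \mathrm{Unif}(0,1)$. I partition $[0,1]$ into $M = \lceil n^{2/3} \rceil$ intervals $I_1,\ldots,I_M$ of equal length, and define $\widetilde{S}_n$ so that each observation $i$ is mapped to the slice indexed by $(z_i, m_i)$, where $m_i$ is the bin containing $F(y_i \mid z_i)$. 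This yields $|\widetilde{S}_n| = O(|Z|\, n^{2/3})$.

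For the discretization bias, the gap $\mathrm{MI}(X, Y \mid Z) - \mathrm{MI}(X, \widetilde{S}_n(Y) \mid Z)$ decomposes over bins as a weighted sum of $\int_{I_m} f(u \mid X=k, Z=j) \log[f(u \mid X=k, Z=j)/\bar{f}_{j,k,m}]\, du$, where $\bar{f}_{j,k,m} := M \int_{I_m} f(u \mid X=k, Z=j)\, du$ is the bin-averaged density. Choose a threshold $\eta_n \asymp n^{1/3}/\log n$, and classify each bin as \emph{good} (where $|f'(\cdot \mid X=k, Z=j)/f(\cdot \mid X=k, Z=j)| < \eta_n$ throughout) or \emph{bad} otherwise. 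On good bins, a second-order Taylor expansion of $\log$ around $\bar{f}_{j,k,m}$ gives per-bin contribution of order $\bar{f}_{j,k,m}(\eta_n/M)^2/M$ and total good-bin contribution $O(\eta_n^2/M^2) = O(1/(\log^2(n)\, n^{2/3}))$. On bad bins, the regularity condition uniformly caps the number of disjoint bad intervals by $N_0$; combined with a Markov-type bound $|\{u : |f'/f| \geq \eta_n\}| = O(1/\eta_n)$ coming from the uniform density bound $f \leq B_0$, the total Lebesgue measure of the bad bins is $O(1/\eta_n)$, so the bad-bin contribution is at most $O(\log(B_0)/\eta_n) = O(\log(n)/n^{1/3})$. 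The total bias is therefore $O(\log(n)/n^{1/3})$.

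For the concentration step, apply Hoeffding's inequality to each empirical probability $\hat{p}_{j,k,h} = n_{j,k}^{(h)}/n$: with deviation $t = \log(n)/\sqrt{32 n}$, each cell fails with probability at most $2\exp(-\log^2(n)/16)$. A union bound over the $O(|Z|\,|X|\,n^{2/3})$ cells gives total failure probability at most $4 n^{-\log(n)/32}$ for $n$ sufficiently large. On the complementary good event, a Lipschitz control of $(x,y) \mapsto x \log(x/y)$ (combined with a multiplicative Chernoff bound that keeps $\hat{p}_{j,h}$ within a constant factor of $p_{j,h} = 1/M$ so that the logarithms do not blow up) bounds $|\widehat{\mathrm{MI}}(X,\widetilde{S}_n(Y) \mid Z) - \mathrm{MI}(X,\widetilde{S}_n(Y) \mid Z)|$ by $o(\log(n)/n^{1/3})$. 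Combining with the bias bound yields the claim with $\delta_0(n) = O(\log(n)/n^{1/3})$.

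The main obstacle will be the bias analysis: the scalings $M = n^{2/3}$ and $\eta_n \asymp n^{1/3}/\log n$ are forced by the need to balance the good-bin contribution $O(\eta_n^2/M^2)$, which grows in $\eta_n$, against the bad-bin contribution $O(1/\eta_n)$, which shrinks in $\eta_n$; any other choice of $M$ yields a rate slower than $\log(n)/n^{1/3}$. A secondary technical point is handling bins whose empirical count is atypically small in the concentration step, where the multiplicative Chernoff lower bound is essential to prevent the logarithmic sensitivity of the MI from amplifying noise in rare bin frequencies.
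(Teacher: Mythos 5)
Your overall architecture (equal-measure bins after the group-wise probability integral transform, a good/bad bin split driven by the regularity condition for the bias term, plus a concentration step) is the natural one, and your bias analysis and your probability bookkeeping for the $1-4n^{-\frac{1}{32}\log n}$ level are essentially right. However, the concentration step as described does not deliver the rate you claim. With $M\asymp n^{2/3}$ bins each cell probability is of order $n^{-2/3}$, so even a Bernstein or multiplicative-Chernoff deviation at confidence level $n^{-c\log n}$ has relative size $\log(n)\,n^{-1/6}$ per cell; a Lipschitz bound on $(x,y)\mapsto x\log(x/y)$ applied cell by cell and summed over the $O(|Z||X|n^{2/3})$ cells then gives $|\widehat{\mathrm{MI}}-\mathrm{MI}(X,\widetilde S_n(Y)|Z)| = O(\log(n)\,n^{-1/6})$, not $o(\log(n)/n^{1/3})$; with plain Hoeffding at deviation $\log(n)/\sqrt{32n}$ the cell-by-cell sum is $O(\log(n)\,n^{1/6})$, which diverges. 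The cell-wise union bound discards the cancellation across cells that makes the first-order error small. The clean fix exploits the one-sidedness of the statement: writing $\widehat{\mathrm{MI}}=\sum_{j,h}\hat p_{jh}\sum_k\hat p_{k|jh}\log\bigl(\hat p_{k|jh}/p_{k|jh}\bigr)+\sum_{j,h,k}\hat p_{jkh}\log p_{k|jh}-\sum_{j,k}\hat p_{jk}\log\hat p_{k|j}$ and dropping the first sum (a nonnegative empirical KL divergence) reduces the problem to the concentration of the single empirical average $\frac{1}{n}\sum_i\log p_{x_i|z_i,\widetilde S_n(y_i)}$ of a suitably truncated bounded function, which concentrates at rate $O(\log(n)/\sqrt{n})$, plus a low-dimensional term involving only $|X||Z|$ cells. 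Some version of this (or a genuine second-order expansion in which the quadratic term is recognized as the \emph{positive} chi-square bias of the plug-in estimator) is needed to reach $\delta_0(n)=O(\log(n)/n^{1/3})$; a two-sided bound is both unnecessary and unattainable by your method.

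Two smaller points. First, you index slices by $(z_i,m_i)$ with $m_i$ the bin of $F(y_i\mid z_i)$, but in the paper's framework $S(\cdot)$ must be a function of $Y$ alone, i.e.\ a partition of the ranked $Y$-values belonging to $\Omega_n(S)$, since the proof of Theorem~\ref{thm:bf2} lower-bounds $\mathrm{BF}(X|Y,Z)$ by the single summand $\mathrm{BF}(X|\widetilde S_n(Y),Z)\,\pi_0^{|\widetilde S_n|-1}(1-\pi_0)^{n-|\widetilde S_n|}$. You should take the common refinement over $j$ of the pullbacks $F^{-1}(\cdot\mid Z=j)$ of the equal partition of $[0,1]$---this is exactly where the factor $|Z|$ in $|\widetilde S_n|=O(|Z|n^{2/3})$ comes from---and observe that refinement can only increase the discretized conditional mutual information, so your bias bound survives. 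Second, the ``Markov-type bound'' $|\Lambda_{j,k}(\eta_n)|=O(1/\eta_n)$ does not follow from $f\le B_0$ alone: on a maximal interval of $\Lambda_{j,k}(\eta_n)$ the length is controlled by the oscillation of $\log f$ divided by $\eta_n$, which requires either a lower bound on $f$, a bounded-variation assumption on $\log f$, or a separate argument that regions where $f$ is tiny contribute negligibly to the mutual information.
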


\begin{proof}[Theorem~\ref{thm:bf2}]
First, according to Lemma~\ref{lem:bf}, there exists a slicing scheme $\widetilde{S}_n(\cdot)$ such that for sufficiently large $n$,
$$
\mathrm{Pr}\left(\widehat{\mathrm{MI}}\left(X,\widetilde{S}_n(Y)|Z\right) \geq \mathrm{MI}\left(X,Y|Z\right)-\delta_0(n)\right) \geq  1-4n^{-\frac{1}{32}\log(n)},
$$
where $|\widetilde{S}_n| = O(|Z|n^{2/3})$ and $\delta_0(n) = O\left(\frac{\log(n)}{n^{1/3}}\right)$.

Second, according to the log-gamma inequality (\ref{eq:gamma}) one can show that for $\alpha_0 \leq |X|$ and $|\widetilde{S}_n| \geq 2$, we have

\begin{eqnarray}\nonumber
& & \sum_{h=1}^{|\widetilde{S}_n|}\sum_{j=1}^{|Z|}\left[\sum_{k=1}^{|X|}\log\Gamma\left(n_{j,k}^{(h)}+\alpha_0/|X|\right)-\log\Gamma\left(n_{j}^{(h)}+\alpha_0\right)\right]\\\nonumber
& & - \sum_{j=1}^{|Z|}\left[\sum_{k=1}^{|X|}\log\Gamma\left(n_{j,k}+\alpha_0/|X|\right)-\log\Gamma\left(n_j+\alpha_0\right)\right]\\\nonumber
&\geq& \sum_{h=1}^{|\widetilde{S}_n|}\sum_{j=1}^{|Z|}\left[\sum_{k=1}^{|X|}\log\Gamma\left(n_{j,k}^{(h)}+1\right)-\log\Gamma\left(n_{j}^{(h)}+|X|\right)\right]\\\nonumber
& & -\sum_{j=1}^{|Z|}\left[\sum_{k=1}^{|X|}\log\Gamma\left(n_{j,k}\right)-\log\Gamma\left(n_j\right)\right]\\\nonumber
&\geq& \sum_{h=1}^{|\widetilde{S}_n|}\sum_{j=1}^{|Z|}\sum_{k=1}^{|X|}n_{j,k}^{(h)}\log\left(\frac{n_{j,k}^{(h)}}{n_j^{(h)}}\right)-\sum_{j=1}^{|Z|}\sum_{k=1}^{|X|}n_{j,k}\log\left(\frac{n_{j,k}}{n_j}\right)-c_0\\\nonumber
& & -(|X|-1.5+\gamma_0)\sum_{h=1}^{|\widetilde{S}_n|}\sum_{j=1}^{|Z|}\log(n_j^{(h)})-(\gamma_0-0.5)\sum_{j=1}^{|Z|}\log(n_j) \\\nonumber
&\geq& n\widehat{\mathrm{MI}}\left(X,\widetilde{S}_n(Y)|Z\right)-|Z|(|X|-1.5+\gamma_0)\log(n/|\widetilde{S}_n|)|\widetilde{S}_n|\\\nonumber
& & - (\gamma_0-0.5)|Z|\log(n)-c_0,
\end{eqnarray}
where $c_0$ is a constant that does not depend on $n$. So,
\begin{eqnarray}\nonumber
\text{BF}\left(X|Y,Z\right) &\geq& \text{BF}\left(X|\widetilde{S}_n(Y),Z\right) \pi_0^{|\widetilde{S}_n|-1}(1-\pi_0)^{n-|\widetilde{S}_n|} \\\nonumber
& \geq & \pi_0^{|\widetilde{S}_n|-1}(1-\pi_0)^{n-|\widetilde{S}_n|} e^{n \left(\widehat{\mathrm{MI}}\left(X,\widetilde{S}_n(Y)|Z\right) - \delta_1(n)\right)},
\end{eqnarray}
where 
\begin{eqnarray}\nonumber
\delta_1(n) &=& \frac{1}{n}\left[|Z|(|X|-1.5+\gamma_0)\log(n/|\widetilde{S}_n|)|\widetilde{S}_n| - (\gamma_0-0.5)|Z|\log(n) + c_0\right.\\\nonumber
& &\left. + |Z|\left(|X|\log\Gamma\left(\frac{\alpha_0}{|X|}\right)-\log\Gamma(\alpha_0)\right)(|\widetilde{S}_n|-1)\right]\\\nonumber
&=& O\left(\frac{|Z|(|X|-1.5+\gamma_0)\log(n/|\widetilde{S}_n|)|\widetilde{S}_n|}{n}\right)\\\nonumber
&=& O\left(\frac{|Z|^2(|X|-1.5+\gamma_0)\log(n)}{3n^{1/3}}\right).
\end{eqnarray}

Moreover, given  $\log\left(\pi_0/(1-\pi_0)\right) = -\lambda_0 \log(n)$ and $\lambda_0 \geq 1$, 
$$
(1-\pi_0)^{n-1} \left(\frac{\pi_0}{1-\pi_0}\right)^{(|S|-1)} \geq \frac{1}{e} \frac{1}{n^{\lambda_0(|S|-1)}},
$$
and
$$
\pi_0^{|\widetilde{S}_n|-1}(1-\pi_0)^{n-|\widetilde{S}_n|} \geq e^{-n\delta_2(n)},
$$
where 
$$
\delta_2(n) = O\left(\frac{\lambda_0\log(n)|\widetilde{S}_n|}{n}\right)= O\left(\frac{\lambda_0|Z|\log(n)}{n^{1/3}}\right).
$$

Therefore, 
$$
\mathrm{Pr}\left(\log\left[\text{BF}\left(X|Y,Z\right)\right] \geq n\left[\mathrm{MI}\left(X,Y|Z\right)-\delta(n)\right] \right) \geq  1-4n^{-\frac{1}{32}\log(n)},
$$
where $\delta(n)=\sum_{i=0}^2\delta_i(n)=O\left(\frac{(\lambda_0+|Z|(|X|-1.5+\gamma_0)/3)|Z|\log(n)}{n^{1/3}}\right) \rightarrow 0$ as $n \rightarrow \infty$. Thus,
$$
\text{BF}\left(X|Y,Z\right) \geq e^{n\left[\mathrm{MI}\left(X,Y|Z\right)-\delta(n)\right]}
$$
almost surely as $n \rightarrow \infty$.
\end{proof}

\section*{Acknowledgements}
This research was supported in part by  NSF grants DMS-1007762 and DMS-1120368. 

\begin{supplement}[id=suppA]
\stitle{Supplement to ``Bayesian Nonparametric Tests via Sliced Inverse Modeling''}
\slink[doi]{xx.xxx}
\sdatatype{.pdf}
\sdescription{We provide additional supporting materials that include detailed proofs and additional empirical results.}
\end{supplement}

\bibliographystyle{imsart-nameyear}
\bibliography{bf-reference}

\begin{thebibliography}{17}

\bibitem[\protect\citeauthoryear{Anderson and Darling}{1952}]{anderson1952}
\begin{barticle}[author]
\bauthor{\bsnm{Anderson},~\bfnm{Ted~W}\binits{T.~W.}} \AND
  \bauthor{\bsnm{Darling},~\bfnm{Donald~A}\binits{D.~A.}}
(\byear{1952}).
\btitle{Asymptotic theory of certain ``goodness of fit'' criteria based on
  stochastic processes}.
\bjournal{The Annals of Mathematical Statistics}
\bvolume{23}
\bpages{193--212}.
\end{barticle}
\endbibitem

\bibitem[\protect\citeauthoryear{Aschard et~al.}{2013}]{aschard2013}
\begin{barticle}[author]
\bauthor{\bsnm{Aschard},~\bfnm{Hugues}\binits{H.}},
  \bauthor{\bsnm{Zaitlen},~\bfnm{Noah}\binits{N.}},
  \bauthor{\bsnm{Tamimi},~\bfnm{Rulla~M}\binits{R.~M.}},
  \bauthor{\bsnm{Lindstr{\"o}m},~\bfnm{Sara}\binits{S.}} \AND
  \bauthor{\bsnm{Kraft},~\bfnm{Peter}\binits{P.}}
(\byear{2013}).
\btitle{A nonparametric test to detect quantitative trait loci where the
  phenotypic distribution differs by genotypes}.
\bjournal{Genetic Epidemiology}
\bvolume{37}
\bpages{323--333}.
\end{barticle}
\endbibitem

\bibitem[\protect\citeauthoryear{Brem et~al.}{2002}]{brem2002}
\begin{barticle}[author]
\bauthor{\bsnm{Brem},~\bfnm{Rachel~B}\binits{R.~B.}},
  \bauthor{\bsnm{Yvert},~\bfnm{Ga{\"e}l}\binits{G.}},
  \bauthor{\bsnm{Clinton},~\bfnm{Rebecca}\binits{R.}} \AND
  \bauthor{\bsnm{Kruglyak},~\bfnm{Leonid}\binits{L.}}
(\byear{2002}).
\btitle{Genetic dissection of transcriptional regulation in budding yeast}.
\bjournal{Science}
\bvolume{296}
\bpages{752--755}.
\end{barticle}
\endbibitem

\bibitem[\protect\citeauthoryear{Burke et~al.}{2012}]{burke2012}
\begin{barticle}[author]
\bauthor{\bsnm{Burke},~\bfnm{David~T}\binits{D.~T.}},
  \bauthor{\bsnm{Kozloff},~\bfnm{Kenneth~M}\binits{K.~M.}},
  \bauthor{\bsnm{Chen},~\bfnm{Shu}\binits{S.}},
  \bauthor{\bsnm{West},~\bfnm{Joshua~L}\binits{J.~L.}},
  \bauthor{\bsnm{Wilkowski},~\bfnm{Jodi~M}\binits{J.~M.}},
  \bauthor{\bsnm{Goldstein},~\bfnm{Steven~A}\binits{S.~A.}},
  \bauthor{\bsnm{Miller},~\bfnm{Richard~A}\binits{R.~A.}} \AND
  \bauthor{\bsnm{Galecki},~\bfnm{Andrzej~T}\binits{A.~T.}}
(\byear{2012}).
\btitle{Dissection of complex adult traits in a mouse synthetic population}.
\bjournal{Genome Research}
\bvolume{22}
\bpages{1549--1557}.
\end{barticle}
\endbibitem

\bibitem[\protect\citeauthoryear{Jiang and Liu}{2014}]{jiang2014a}
\begin{barticle}[author]
\bauthor{\bsnm{Jiang},~\bfnm{Bo}\binits{B.}} \AND
  \bauthor{\bsnm{Liu},~\bfnm{Jun~S}\binits{J.~S.}}
(\byear{2014}).
\btitle{Variable selection for general index models via sliced inverse
  regression}.
\bjournal{The Annals of Statistics}
\bvolume{42}
\bpages{1751--1786}.
\end{barticle}
\endbibitem

\bibitem[\protect\citeauthoryear{Jiang, Ye and Liu}{2015a}]{jiang2014b}
\begin{barticle}[author]
\bauthor{\bsnm{Jiang},~\bfnm{Bo}\binits{B.}},
  \bauthor{\bsnm{Ye},~\bfnm{Chao}\binits{C.}} \AND
  \bauthor{\bsnm{Liu},~\bfnm{Jun~S.}\binits{J.~S.}}
(\byear{2015}a).
\btitle{Non-parametric $K$-sample tests via dynamic slicing}.
\bjournal{To appear in Journal of the American Statistical Association}.
\bdoi{10.1080/01621459.2014.920257}
\end{barticle}
\endbibitem

\bibitem[\protect\citeauthoryear{Jiang, Ye and Liu}{2015b}]{jiang2015}
\begin{barticle}[author]
\bauthor{\bsnm{Jiang},~\bfnm{Bo}\binits{B.}},
  \bauthor{\bsnm{Ye},~\bfnm{Chao}\binits{C.}} \AND
  \bauthor{\bsnm{Liu},~\bfnm{Jun~S}\binits{J.~S.}}
(\byear{2015}b).
\btitle{Supplement to ``Bayesian nonparametric tests via sliced inverse
  modeling''}.
\bpages{DOI: xx.xxx}.
\end{barticle}
\endbibitem

\bibitem[\protect\citeauthoryear{Kass and Raftery}{1995}]{kass1995}
\begin{barticle}[author]
\bauthor{\bsnm{Kass},~\bfnm{Robert~E}\binits{R.~E.}} \AND
  \bauthor{\bsnm{Raftery},~\bfnm{Adrian~E}\binits{A.~E.}}
(\byear{1995}).
\btitle{Bayes factors}.
\bjournal{Journal of the American Statistical Association}
\bvolume{90}
\bpages{773--795}.
\end{barticle}
\endbibitem

\bibitem[\protect\citeauthoryear{Lander and Botstein}{1989}]{lander1989}
\begin{barticle}[author]
\bauthor{\bsnm{Lander},~\bfnm{Eric~S}\binits{E.~S.}} \AND
  \bauthor{\bsnm{Botstein},~\bfnm{David}\binits{D.}}
(\byear{1989}).
\btitle{Mapping mendelian factors underlying quantitative traits using RFLP
  linkage maps.}
\bjournal{Genetics}
\bvolume{121}
\bpages{185--199}.
\end{barticle}
\endbibitem

\bibitem[\protect\citeauthoryear{Lander and Schork}{1994}]{lander1994}
\begin{barticle}[author]
\bauthor{\bsnm{Lander},~\bfnm{Eric~S}\binits{E.~S.}} \AND
  \bauthor{\bsnm{Schork},~\bfnm{Nicholas~J}\binits{N.~J.}}
(\byear{1994}).
\btitle{Genetic dissection of complex traits}.
\bjournal{Science}
\bvolume{265}
\bpages{2037--2048}.
\end{barticle}
\endbibitem

\bibitem[\protect\citeauthoryear{Li and Chen}{2007}]{li2007}
\begin{barticle}[author]
\bauthor{\bsnm{Li},~\bfnm{Xin}\binits{X.}} \AND
  \bauthor{\bsnm{Chen},~\bfnm{Chao-Ping}\binits{C.-P.}}
(\byear{2007}).
\btitle{Inequalities for the gamma function}.
\bjournal{Journal of Inequalities in Pure and Applied Mathematics}
\bvolume{8}
\bpages{Issue 1 Article 28}.
\end{barticle}
\endbibitem

\bibitem[\protect\citeauthoryear{Mann and Whitney}{1947}]{mann1947}
\begin{barticle}[author]
\bauthor{\bsnm{Mann},~\bfnm{Henry~B}\binits{H.~B.}} \AND
  \bauthor{\bsnm{Whitney},~\bfnm{Donald~R}\binits{D.~R.}}
(\byear{1947}).
\btitle{On a test of whether one of two random variables is stochastically
  larger than the other}.
\bjournal{The Annals of Mathematical Statistics}
\bvolume{18}
\bpages{50--60}.
\end{barticle}
\endbibitem

\bibitem[\protect\citeauthoryear{Morley et~al.}{2004}]{morley2004}
\begin{barticle}[author]
\bauthor{\bsnm{Morley},~\bfnm{Michael}\binits{M.}},
  \bauthor{\bsnm{Molony},~\bfnm{Cliona~M}\binits{C.~M.}},
  \bauthor{\bsnm{Weber},~\bfnm{Teresa~M}\binits{T.~M.}},
  \bauthor{\bsnm{Devlin},~\bfnm{James~L}\binits{J.~L.}},
  \bauthor{\bsnm{Ewens},~\bfnm{Kathryn~G}\binits{K.~G.}},
  \bauthor{\bsnm{Spielman},~\bfnm{Richard~S}\binits{R.~S.}} \AND
  \bauthor{\bsnm{Cheung},~\bfnm{Vivian~G}\binits{V.~G.}}
(\byear{2004}).
\btitle{Genetic analysis of genome-wide variation in human gene expression}.
\bjournal{Nature}
\bvolume{430}
\bpages{743--747}.
\end{barticle}
\endbibitem

\bibitem[\protect\citeauthoryear{Storey, Akey and Kruglyak}{2005}]{storey2005}
\begin{barticle}[author]
\bauthor{\bsnm{Storey},~\bfnm{John~D}\binits{J.~D.}},
  \bauthor{\bsnm{Akey},~\bfnm{Joshua~M}\binits{J.~M.}} \AND
  \bauthor{\bsnm{Kruglyak},~\bfnm{Leonid}\binits{L.}}
(\byear{2005}).
\btitle{Multiple locus linkage analysis of genomewide expression in yeast}.
\bjournal{PLoS Biology}
\bvolume{3}
\bpages{e267}.
\end{barticle}
\endbibitem

\bibitem[\protect\citeauthoryear{Welch}{1947}]{welch1947}
\begin{barticle}[author]
\bauthor{\bsnm{Welch},~\bfnm{Bernard~L}\binits{B.~L.}}
(\byear{1947}).
\btitle{The generalization of ``student's'' problem when several different
  population variances are involved}.
\bjournal{Biometrika}
\bvolume{34}
\bpages{28--35}.
\end{barticle}
\endbibitem

\bibitem[\protect\citeauthoryear{Wilcoxon}{1945}]{wilcoxon1945}
\begin{barticle}[author]
\bauthor{\bsnm{Wilcoxon},~\bfnm{Frank}\binits{F.}}
(\byear{1945}).
\btitle{Individual comparisons by ranking methods}.
\bjournal{Biometrics Bulletin}
\bvolume{1}
\bpages{80--83}.
\end{barticle}
\endbibitem

\bibitem[\protect\citeauthoryear{Zhang and Liu}{2007}]{zhang2007}
\begin{barticle}[author]
\bauthor{\bsnm{Zhang},~\bfnm{Yu}\binits{Y.}} \AND
  \bauthor{\bsnm{Liu},~\bfnm{Jun~S}\binits{J.~S.}}
(\byear{2007}).
\btitle{Bayesian inference of epistatic interactions in case-control studies}.
\bjournal{Nature Genetics}
\bvolume{39}
\bpages{1167--1173}.
\end{barticle}
\endbibitem

\end{thebibliography}

\makeatletter
\renewcommand\@biblabel[1]{}
\makeatother

\end{document}